\tikzset{>=latex}
\newtheorem{lemma}{Lemma}
\newcounter{tempEquationCounter} 
\newcounter{thisEquationNumber}
\newenvironment{floatEq}
{\setcounter{thisEquationNumber}{\value{equation}}\addtocounter{equation}{1}
	\begin{figure*}[!t]
		\scriptsize\setcounter{tempEquationCounter}{\value{equation}}
		\setcounter{equation}{\value{thisEquationNumber}}
	}
	{\setcounter{equation}{\value{tempEquationCounter}}
		\hrulefill\vspace*{4pt}
	\end{figure*}
	
}
\begin{document}
	
\bstctlcite{IEEEexample:BSTcontrol}

\title{Age of Information Performance of Multiaccess Strategies with Packet Management}

\author{
	\IEEEauthorblockN{Antzela~Kosta\IEEEauthorrefmark{1}, Nikolaos~Pappas\IEEEauthorrefmark{1}, Anthony~Ephremides\IEEEauthorrefmark{1}\IEEEauthorrefmark{2}, and~Vangelis~Angelakis\IEEEauthorrefmark{1}}
	\IEEEauthorblockA{\IEEEauthorrefmark{1} Department of Science and Technology, Link{\"o}ping University, Campus  Norrk{\"o}ping, 
		60 174, Sweden}
	\IEEEauthorblockA{\IEEEauthorrefmark{2} Electrical and Computer Engineering Department, University of Maryland, College Park, MD 20742\\
		E-mail: \{antzela.kosta, nikolaos.pappas, vangelis.angelakis\}@liu.se,  etony@umd.edu}  }

\maketitle

\begin{abstract}
We consider a system consisting of $N$ source nodes communicating with a common receiver.
Each source node has a buffer of infinite capacity to store incoming bursty traffic in the form of status updates transmitted in packets, which should maintain the status information at the receiver fresh. 
Packets waiting for transmission can be discarded to avoid wasting network resources for the transmission of stale information.
We investigate the age of information (AoI) performance of the system 
under scheduled and random access. 
Moreover, we present analysis of the AoI with and without packet management at the transmission queue of the source nodes, where as packet management we consider the capability to replace unserved packets at the queue whenever newer ones arrive. 
Finally, we provide simulation results that illustrate the impact of the network operating parameters on the age performance of the different access protocols. 
\end{abstract}

\begin{IEEEkeywords}
	Age of information, real time systems, queueing theory, multiple-access channels, performance analysis, packet management.
\end{IEEEkeywords}
\IEEEpeerreviewmaketitle

\section{Introduction}
Future networks should support applications with heterogeneous QoS requirements, where critical performance indicators are the end-to-end delay, the throughput, the energy efficiency, and the service reliability.
The concept of \emph{age of information} (AoI) was introduced in \cite{Kaul11_SECON,Kaul11_GLOBECOM} to quantify the freshness of the knowledge we have about the status of a remote system. 
The \emph{age} captures the time elapsed since the last received message containing  update information was generated.
The novelty of this metric to characterize the freshness of information in a communication system differentiates it from other conventional metrics such as delay and connects it with emerging real-time wireless applications.

Maintaining data freshness is a requirement in numerous applications like wireless sensor networks (WSN) for healthcare and environmental monitoring, active data warehousing, energy harvesting \cite{Wu17_TGN,Arafa18_twc,Arafa18_arXiv_trans,Nath18_TGCom,Chen19_INFOCOM,Gu19_IoT}, web caching \cite{Yu99,Kam17_ISIT,Yates17_ISIT2,Zhong18_ISIT}, real time databases, ad hoc networks \cite{Kam2017_Milcom}, wireless smart camera networks \cite{He18_INFOCOM}, UAV-assisted IoT networks \cite{abd2018arXiv,Elmagid18_arXiv}, broadcast and multicast wireless networks \cite{Kadota2016_Allerton, Hsu17_ISIT,Kadota18_trans,Buyukates19_arXiv}, etc.
Moreover, in the field of adaptive transmission significant efficiency gains can be obtained by adaptive signaling strategies.
However, this feedback scheme is constrained by the acquisition of timely channel state information (CSI) \cite{Costa15_ICC, Costa15_ISIT, Klein17, Farazi17_ICCCN}.

The first attempts to address the AoI of a source at the destination of a status update transmission system were made through simple queueing models.
In \cite{Kaul12_INFOCOM}, three simple models were studied, the M/M/1, the M/D/1, and the D/M/1, under the first-come-first-served (FCFS) discipline. 
Alternative measures of stale information that are by-products of AoI are studied for the M/M/1 queue in \cite{Kosta17_ISIT}.
An expansion of the basic model that includes multiple sources sharing a common queue is considered in \cite{Modiano15_ISIT,Yates19_transactions,Stamatakis18_arXiv}.
The analysis therein illustrated how combining multiple sources in a common queue is more efficient in terms of the average AoI of each source, than serving them separately. 

Moving to different system characteristics, in \cite{Kam16} the authors consider different systems with either plentiful or limited network resources (servers).
Under this assumption, a more dynamic feature of networks is considered, that is, packets traveling over a network might reach the destination through numerous alternative  paths thus the delay of each packet might differ.
In this context, the performance of the M/M/1, M/M/2, and M/M/$\infty$ queues is provided, and the tradeoff between AoI and the waste of network resources in terms of non-informative packets as the number of servers varies, is demonstrated.

Two efficient ways to avoid congestion in networks are packet management techniques and admission control, since they can manage the traffic entering them.
Packet management by dropping or replacing packets was investigated in \cite{Costa16,Pappas15_ICC} where the M/M/1/1, M/M/1/2, and M/M/1/2* queues are considered.
A key outcome was that packet management can promote smaller average AoI, when compared to schemes without replacement and the same number
of servers.
The last-come-first-served (LCFS) queue discipline differs from packet management in that packets are not dropped if an infinite buffer is considered.
Allowing newly generated status updates to surpass older status updates, with and without the use of preemption, was studied in \cite{Kaul12_CISS,Najm16_ISIT,Bedewy16_ISIT,Yates18_INFOCOM,Najm18_INFOCOM,Yates18_arXivSHS,Inoue18_arXiv}.

Furthermore, a diversity of additional resource sharing features of a communication system have been studied in relation to AoI.
Transmission scheduling  is considered in \cite{He16_WiOpt,He16_ICC, Talak18_arXiv_SPAWC,Talak18_arXiv_Mobihoc,Talak18_arXiv_ISIT,Kadota18_transactions,Jiang2018_arXiv,Maatouk19_INFOCOM,Jiang19_INFOCOM} where centralized and decentralized scheduling policies for AoI minimization, under general interference constraints and time varying channels, are proposed.
The proposed scheduling algorithms have low complexity with strong AoI performances over stochastic information arrivals.
In \cite{Kaul17_ISIT} the authors consider scheduled access and slotted ALOHA-like random access, however the queueing aspect along with random access is not captured.
Throughput and AoI performance in a cognitive shared access network with queueing analysis has been studied in \cite{Kosta18_GLOBECOM}.
Additional references can be found in the survey \cite{NET-060}. 

\subsection{Contribution}
In this work, we focus on the AoI performance of a network consisting of $N$ source nodes communicating with a common receiver.
Each source node has a buffer of infinite capacity to store incoming bursty traffic in the form of packets which should keep the receiver timely updated.
We consider that the source nodes can discard packets waiting for transmission, in a process that is referred to as packet management.
We present analysis of the time average AoI with and without packet management at the transmission queue of the source nodes. 
We investigate three different policies to access the common medium (i) a round-robin scheduler (ii) a work-conserving scheduler 
(iii) random access.
To incorporate the effect of channel fading and network path diversity in such a system we provide simulation results that illustrate the impact of network operating parameters on the performance of the different access protocols.
The network path diversity refers to the transmission of packets over multiple alternate paths.

\section{System Model}
\label{sec:network_model}
We consider a wireless network consisting of $N$ source nodes communicating with a common receiver.
Each node has a buffer of infinite capacity to store incoming packets.
These packets are then sent through error-prone channels to the destination $d$, as shown in Fig.~\ref{fig:system_model}. 
Packets have equal length and time is divided into slots such that the transmission time of a packet from the buffer to the destination is equal to one slot.
Each such packet is said to provide a \emph{status update} and these two terms are used interchangeably.
The status updates arrivals are modeled by independent and identically distributed (i.i.d.) Bernoulli processes with average probabilities $\lambda_i \in (0,1)$, for
  $i=1,\dots,N$.
The probability distribution of time until successful delivery is assumed to be geometric with mean $1/\mu_i$ slots, for the $i$th node, where $\mu_i$ is referred as the service rate of the $i$th node.

We consider two different queue disciplines: \emph{without} and \emph{with} packet replacement. 
The first, assumes that all packets need to be delivered to the destination regardless of the freshness of the status update information.
We note that the motivation behind this discipline is in terms of the reconstructability of the transmitted information (that can also be related with estimation and prediction theory aspects) that is beyond the scope of this work.
The second discipline assumes that a packet which arrives while another packet is being served may be kept in the queue waiting for transmission. However, the packets waiting for transmission are replaced by newly generated packets of the same source.
We denote this discipline by \emph{replacement} and the process of discarding the packets from the queue is referred to as \emph{packet management}.
The packet management is expected to improve the performance of the system with respect to the staleness of the transmitted information.
Nevertheless, note that this is a non-conventional queueing model, for which some of the classic results from queueing theory, such as Little's law, do not apply \cite{Kleinrock}.

\begin{figure}[t!]
	\centering
	\includegraphics[scale=.52]{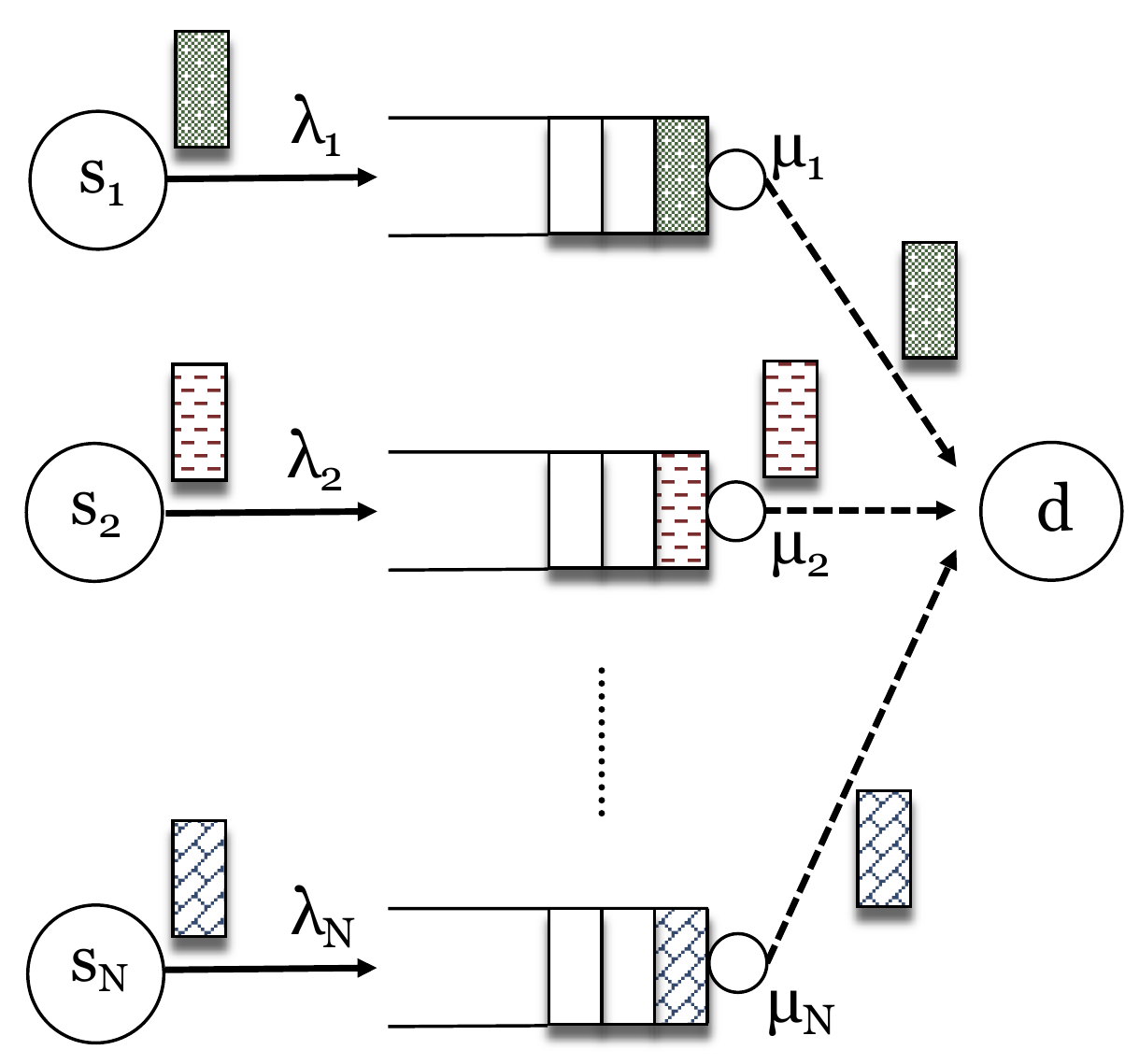}
	\caption{Status updates over a multiaccess network.}
	\label{fig:system_model}
\end{figure}

Status updates depart from the queues either in a perfectly scheduled or a random fashion.
We consider three different policies to access the common medium. 
\begin{itemize}
	\item \emph{Round-robin}: The scheduler assigns time slots to each node in equal portions and in fixed circular order.
	
	\item \emph{Work-conserving}: 
	The scheduler makes probabilistic decisions in each time slot, among the nodes that have a packet at the transmission queue.		

	\item \emph{Random}: The nodes attempt to transmit the packet at the head of the queue with a given probability $q_i$ colliding with each other.
\end{itemize}
These policies will be presented, evaluated, and compared in the next sections in terms of their AoI performance. 
 
\section{Age of Information Analysis}
\label{sec:AoI_model}

To derive the time average AoI of the system we start by characterizing AoI in terms of random variables that capture the age evolution at the receiver.
The age at the receiver depends on the packet receptions and the delay imposed by the network to these packets.
Then, expectations of the random variables are calculated for each of the queue disciplines separately.
In the next section, we evaluate AoI for the proposed access policies where the exact service rate at the queues is incorporated. 

Consider that the $j$th status update of node $i$ is generated at time $t_{ij}$, delivered through the transmission system, and received by the destination at time $t_{ij}^{'}$.
Then, we denote by $T_{ij} = t'_{ij}  - t_{ij}$ the system time of update $j$ of the $i$th node.
This corresponds to the sum of the queueing time and the queue service time.
The interarrival time of update $j$ of node $i$ is defined as the
random variable $Y_{ij} = t_{ij} - t_{i(j-1)}$.
Finally, let $Z_{ij} =  t'_{ij} - t'_{i(j-1)}$ be the random variable denoting the time between the reception of status update $(j-1)$ and $j$ of node $i$.

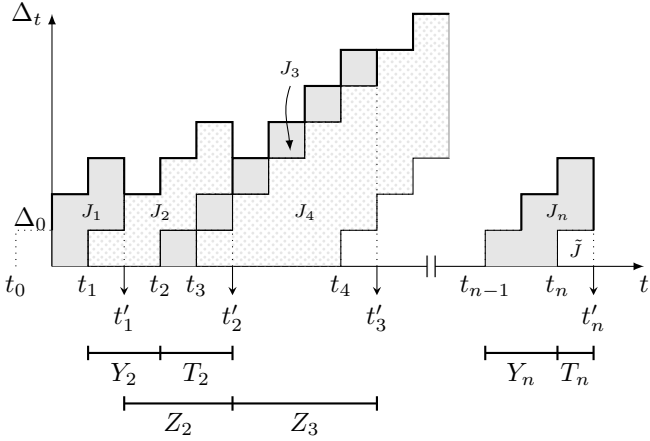
\begin{figure}[t!]
	\centering
	\begin{tikzpicture}[scale=0.96]
\draw[->] (0,0) -- (8.2,0) node[anchor=north] {$t$};
\draw[->] (0,0) -- (0,3.5) node[anchor=east] {$\Delta_t$};

\draw	(-0.3,0.38) node[anchor=south] {$\Delta_0$};

 

\draw[fill=gray!20] (0,0.0) -- (0,1.0) -- (0.5,1)-- (0.5,1.5)-- (0.5,1.5)-- (1.0,1.5)-- (1.0,0.5) -- (0.5,0.5) -- (0.5,0.0);
 
\draw[fill=gray!20]  (1.5,0) -- (1.5,0.5) -- (2.0,0.5) -- (2.0,1)-- (2.5,1) -- (2.5,1.5)-- (3.0,1.5)-- (3.0,2)-- (3.5,2.0)-- (3.5,2.5)-- (4.0,2.5)-- (4.0,3.0)-- (4.5,3.0)-- (4.5,2.5)-- (4.0,2.5)-- (4.0,2.0)-- (3.5,2)-- (3.5,1.5)-- (3.0,1.5)-- (3.0,1) -- (2.5,1)-- (2.5,0.5)-- (2,0.5)-- (2,0);

\draw[pattern=crosshatch dots, pattern color=gray!20] (0.5,0)-- (0.5,0.5)-- (1,0.5)--   (1.0,1)-- (1.5,1) -- (1.5,1.5)-- (2.0,1.5)-- (2.0,2)-- (2.5,2.0)-- (2.5,1.0)-- (2.0,1.0)-- (2.0,0.5)-- (1.5,0.5)-- (1.5,0.0);

\draw[pattern=crosshatch dots, pattern color=gray!20]  (2.0,0.0) -- (2.0,0.5) -- (2.5,0.5) -- (2.5,1.0) -- (3.0,1.0) -- (3.0,1.5) -- (3.5,1.5) -- (3.5,2.0) -- (4.0,2.0) -- (4.0,2.5) -- (4.5,2.5) -- (4.5,3.0) -- (5.0,3.0) -- (5.0,3.5)  -- (5.5,3.5) -- (5.5,1.5) -- (5.0,1.5) -- (5.0,1.0) -- (4.5,1.0) -- (4.5,0.5) -- (4.0,0.5)  -- (4.0,0.0);

\draw[white] (5.5,3.5) -- (5.5,1.5) ;

\draw[fill=gray!20] (6.0,0.0) -- (6.0,0.5) --  (6.5,0.5) -- (6.5,1)-- (7.0,1) -- (7.0,1.5)-- (7.5,1.5)-- (7.5,0.5) -- (7.0,0.5) -- (7.0,0.0);

\draw	(-0.5,0) node[anchor=north] {$t_0$}
           (0.5,0) node[anchor=north] {$t_1$}
		    (1.5,0) node[anchor=north] {$t_2$}
		    (2,0) node[anchor=north] {$t_3$}
		    (4.0,0) node[anchor=north] {$t_4$}
		    (6,0) node[anchor=north] {$t_{n-1}$}
		    (7.0,0) node[anchor=north] {$t_n$};
		    
\draw[->,>=stealth]    (1,0) -- (1,-0.4) node[anchor=south,below] {$t'_1$};
\draw[->,>=stealth]  (2.5,0) -- (2.5,-0.4) node[anchor=south,below] {$t'_2$};
\draw[->,>=stealth]  (4.5,0) -- (4.5,-0.4) node[anchor=south,below] {$t'_3$};
\draw[->,>=stealth]   (7.5,0) -- (7.5,-0.4) node[anchor=south,below] {$t'_n$};
		    	    		 
\draw	(0.55,0.75) node{{\scriptsize $J_1$}}
		    (1.5,0.75) node{{\scriptsize $J_2$}};
\draw   (3.5,0.75) node{{\scriptsize $J_4$}}
           (7,0.75) node{{\scriptsize $J_n$}};
 \draw   (7.25,0.25) node{{\scriptsize $\tilde{J}$}};
           
\draw[<-] (3.3,1.6) to [out=95,in=250] (3.3,2.5) node [above] {{\scriptsize $J_3$}};           
           
\draw [thick](0.5,-1.2) -- (1.5,-1.2) node[pos=.5,sloped,below] {$Y_2$} ;
\draw[thick]  (0.5,-1.3) -- (0.5,-1.1); 
\draw [thick](1.5,-1.2) -- (2.5,-1.2) node[pos=.5,sloped,below] {$T_2$} ;
\draw[thick]  (1.5,-1.3) -- (1.5,-1.1) 
                    (2.5,-1.3) -- (2.5,-1.1);
                    
\draw [thick](6,-1.2) -- (7.0,-1.2) node[pos=.5,sloped,below] {$Y_n$} ;
\draw[thick]  (6,-1.3) -- (6,-1.1); 
\draw [thick](7.0,-1.2) -- (7.5,-1.2) node[pos=.5,sloped,below] {$T_n$} ;
\draw[thick]  (7.0,-1.3) -- (7.0,-1.1) 
                    (7.5,-1.3) -- (7.5,-1.1);
                    
\draw [thick](1,-1.9) -- (2.5,-1.9) node[pos=.5,sloped,below] {$Z_2$} ;
\draw[thick]  (1,-2.0) -- (1,-1.8); 
\draw[thick]  (2.5,-2.0) -- (2.5,-1.8); 
\draw [thick](2.5,-1.9) -- (4.5,-1.9) node[pos=.5,sloped,below] {$Z_3$} ;
\draw[thick]  (4.5,-2.0) -- (4.5,-1.8);                    
 
\draw[thick] (0,0.5) -- (0,1.0) -- (0.5,1)-- (0.5,1.5)-- (0.5,1.5)-- (1.0,1.5)-- (1.0,1.0);

\draw[thick] (1.0,1)-- (1.5,1) -- (1.5,1.5)-- (2.0,1.5)-- (2.0,2)-- (2.5,2.0)-- (2.5,1.5);
\draw[thick] (2.5,1) -- (2.5,1.5)-- (3.0,1.5)-- (3.0,2)-- (3.5,2.0)-- (3.5,2.5)-- (4.0,2.5)-- (4.0,3.0)-- (4.5,3.0);
\draw[thick]  (4.5,2.5)-- (4.5,3.0)-- (5.0,3.0)-- (5.0,3.5)-- (5.5,3.5);

\draw[thick] (6.5,0.5) -- (6.5,1)-- (7.0,1) -- (7.0,1.5)-- (7.5,1.5)-- (7.5,0.5);

\draw[dotted] (1,0) -- (1,1.5);
\draw[dotted] (2.5,0) -- (2.5,2.0); 
\draw[dotted] (4.5,0) -- (4.5,3.0); 
\draw[dotted] (7.5,0) -- (7.5,0.5); 
\draw[dotted] (-0.5,0) -- (-0.5,0.5)-- (0,0.5); 
\draw[dotted] (0.5,0) -- (0.5,0.5) -- (1.0,0.5) -- (1.0,1); 
\draw[dotted] (1.5,0) -- (1.5,0.5) -- (2.0,0.5) -- (2.0,1)-- (2.5,1); 
\draw[dotted] (2,0) -- (2,0.5) -- (2.5,0.5) -- (2.5,1)-- (3.0,1) -- (3.0,1.5)-- (3.5,1.5)-- (3.5,2)-- (4.0,2.0)-- (4.0,2.5); 

\draw[dotted] (4,0) -- (4,0.5) -- (4.5,0.5)-- (4.5,1.0)-- (5.0,1.0); 
\draw[dotted] (6,0) -- (6,0.5) -- (6.5,0.5); 
\draw[dotted] (7.0,0) -- (7.0,0.5) -- (7.5,0.5); 

\draw[thick]  (5.2,-0.15) -- (5.2,0.15) 
                    (5.3,-0.15) -- (5.3,0.15);
\draw[white, fill=white!50] (5.21,-0.2) -- (5.21,0.2) -- (5.29,0.2) -- (5.29,-0.2) ;                   

\end{tikzpicture}
	\caption{Example of age evolution of node $i$ at the receiver.}
	\label{fig:age_vs_time_slotted}
\end{figure}

The AoI of each source node at the destination is defined as the random process $\Delta_t = t-u(t)$, where $u(t)$ is the timestamp of the most recently received update from that source.
An illustrative example of the evolution of the age of information of source $i$ in time is shown in Fig.~\ref{fig:age_vs_time_slotted}.
Without loss of generality, we assume that the observation of the system starts at $t=0$. 
At that time the queues are empty, and the AoI of the $i$th node at the destination is $\Delta_0$.
In the time intervals $[ t_{i(j-1)}^{'}, t_{ij}^{'}]$, $\forall j$,
the AoI increases in a stair-step fashion due to the absence of updates from node $i$ at the destination. 
Upon reception of a status update from node $i$ the AoI of that node is reset to a smaller value that is equal to the delay that the packet experienced.

Ensuring the average AoI of the $i$th node is small corresponds to maintaining information about the status of the node at the destination fresh.
For presentation clarity, from now on we drop the index denoting the source and focus on the packet index.
Given an age process $\Delta_t$ and assuming ergodicity, the average age can be calculated using a sample average that converges to its corresponding stochastic average.
For an interval of observation $(0,\mathcal{T})$, the time average age of node $i$ is
\begin{equation}
\Delta_{\mathcal{T}}=\frac{1}{\mathcal{T}}\sum_{t=0}^{N(\mathcal{T})}\Delta_t,
\label{eq:average_age}
\end{equation}
when we assume that the observation interval ends with the service completion of $N(\mathcal{T})$ samples.
The summation in \eqref{eq:average_age} can be calculated as the area under $\Delta_t$.
Then, the time average age can be rewritten as a sum of disjoint geometric parts.
Starting from $t=0$, the area is decomposed into the area $J_1$, the areas $J_j$ for $j=2, 3, \ldots N(\mathcal{T})$, and the area of width $T_n$ that we denote by $\tilde{J}$.
Then, the decomposition of $\Delta_{\mathcal{T}}$ yields
\begin{align}
\Delta_{\mathcal{T}} = & \frac{1}{\mathcal{T}} \left( J_1 + \tilde{J} + \sum_{j=2}^{N(\mathcal{T})} J_j \right) = \notag \\
= &  \frac{J_1 + \tilde{J}}{\mathcal{T}} + \frac{N(\mathcal{T})-1}{\mathcal{T}} \frac{1}{N(\mathcal{T})-1} \sum_{j=2}^{N(\mathcal{T})} J_j.
\label{eq:Delta_T}
\end{align}
The time average $\Delta_{\mathcal{T}}$ tends to the ensemble \textit{average age} as $\mathcal{T} \rightarrow\infty$, i.e.,
\begin{equation}
\Delta = \lim_{\mathcal{T} \rightarrow\infty}\Delta_{\mathcal{T}}.\footnote{We assume that the existence of the limit is guaranteed by the stability of the queues.}
\label{eq:Delta_avglimit}
\end{equation}
Note that the term $(J_1 + \tilde{J})/\mathcal{T}$ goes to zero as  $\mathcal{T}$ grows and also let
\begin{equation}
\lambda = \lim_{\mathcal{T} \rightarrow\infty} \frac{N(\mathcal{T})}{\mathcal{T}}
\label{eq:lambda}
\end{equation}
be the steady state rate of status updates generation.
Furthermore, using the definitions of the interarrival and system times, we can write the areas $J_j$ as
\begin{align}
J_j & = \sum_{m=1}^{Y_j+T_j} m - \sum_{m=1}^{T_j} m = \notag \\ 
& = \frac{1}{2} (Y_j+T_j)(Y_j+T_j+1) - \frac{1}{2} T_j(T_j+1) = \notag \\
& = Y_j T_j + Y_j^2/2 + Y_j/2.
\label{eq:J_i}
\end{align}
Then, substituting \eqref{eq:Delta_T}, \eqref{eq:lambda}, and \eqref{eq:J_i}, to \eqref{eq:Delta_avglimit} the average age of information of the $i$th node is given by
\begin{equation}
\Delta = \lambda\: \left( \mathbb{E}[YT]+\frac{\mathbb{E}[Y^2]}{2}+ \frac{\mathbb{E}[Y]}{2} \right),
\label{eq:av_prop}
\end{equation}
where $\mathbb{E}[\cdot]$ is the expectation operator.
The expression obtained in \eqref{eq:av_prop} differs from the expression obtained in \cite{Kaul12_INFOCOM} for the continuous time setup of the problem by an additional term $\mathbb{E}[Y]/2$.

Alternatively, we can express the areas $J_j$ with respect to the random variables $Z_j$, as follows  
\begin{align}
J_j & = \sum_{m=1}^{T_{j-1}+Z_j} m - \sum_{m=1}^{T_j} m  =\notag \\ 
& = \frac{1}{2} (T_{j-1}+Z_j)(T_{j-1}+Z_j+1) - \frac{1}{2} T_j(T_j+1),
\label{eq:J_i_v2temp}
\end{align}
and utilize the fact that when the system reaches steady state $T_{j-1}$ and $T_j$ are identically distributed.
We use $\mathbb{E}[T]$ to represent the expected value of $T_j$ for an arbitrary $j$.
Taking expectations of both sides gives
\begin{equation}
	\mathbb{E}[J] = \mathbb{E}[Z T] + \mathbb{E}[Z^2]/2 + \mathbb{E}[Z]/2.
	\label{eq:J_i_v2}
\end{equation}
Then, substituting \eqref{eq:Delta_T}, \eqref{eq:lambda}, and \eqref{eq:J_i_v2}, to \eqref{eq:Delta_avglimit} the average age of information of the $i$th node is given by
\begin{equation}
\Delta = \lambda\: \left( \mathbb{E}[ZT]+\frac{\mathbb{E}[Z^2]}{2}+ \frac{\mathbb{E}[Z]}{2} \right).
\label{eq:Delta_v2}
\end{equation}
In what follows, we analyze the steady-state age of information without and with packet management at the transmission queues.

	\begin{figure}[t!]
		\centering
		 \scalebox{.85}{
		 \begin{tikzpicture}
\node[state]             (s) {0};
\node[state, right=of s] (r) {1};
\node[state, right=of r] (r2) {2};
\node[state, right=of r2] (r3) {3};

\node[draw=none, right=of r3]   (r4)    {$\cdots$};

\draw[every loop, auto=left]
(s) edge[loop left] node {$(1-\lambda)$} (s)
(s) edge[bend left] node {$\lambda$} (r)
(r) edge[bend left] node {$\lambda(1-\mu)$} (r2)
(r2) edge[bend left] node {$\lambda(1-\mu)$} (r3)
(r3) edge[bend left] node {$\lambda(1-\mu)$} (r4);

\draw[every loop,auto=right]
(r4) edge[bend left, below] node {$\mu(1-\lambda)$} (r3)
(r3) edge[bend left, below] node {$\mu(1-\lambda)$} (r2)
(r2) edge[bend left, below] node {$\mu(1-\lambda)$} (r)
(r) edge[bend left, below] node {$\mu(1-\lambda)$} (s);           

\draw[] (r) edge[out=110, in=75
, looseness=0.8, loop
, distance=1.1cm, ->] node[above=0.5pt] {$1-r-s$} (r);

\draw[] (r2) edge[out=110, in=75
, looseness=0.8, loop
, distance=1.1cm, ->] node[above=0.5pt] {$1-r-s$} (r2);

\draw[] (r3) edge[out=110, in=75
, looseness=0.8, loop
, distance=1.1cm, ->] node[above=0.5pt] {$1-r-s$} (r3);

\end{tikzpicture}    }
		\caption{The DTMC which models the \emph{Geo/Geo/1} queue evolution at node $i$.}
		\label{fig:markov_chain_1}
\end{figure}
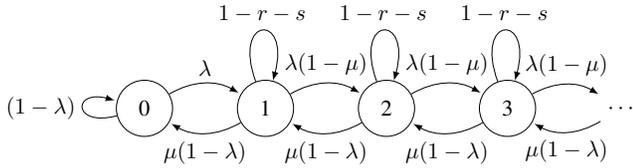

\subsection{Geo/Geo/1 Queue}
First, we derive the average AoI in \eqref{eq:av_prop} of the $i$th node without packet management, at the destination.
The interarrival times $Y_j$ are i.i.d. sequences that follow a geometric distribution therefore we know that 
\begin{align}
\mathbb{E}[Y_j] & =\frac{1}{\lambda},  \qquad
\mathbb{E}[Y_j^2]  = \frac{2-\lambda}{\lambda^2}.
\label{eq:E_Y_Y2}
\end{align}
Then, the only unknown term for the calculation of the average age is the expectation $\mathbb{E}[YT]$.
The system time of update $j$ is $T_j=W_j+S_j$, where $W_j$ and $S_j$ are the waiting time and service time of update $j$, respectively. 
Since the service times $S_j$ are independent of the interarrival times $Y_j$, we can write
\begin{equation}
\mathbb{E}[Y_j T_j] = \mathbb{E}[Y_j(W_j+S_j)] = \mathbb{E}[Y_j W_j] + \mathbb{E}[Y_j]\mathbb{E}[S_j],
\label{eq:expTiYi}
\end{equation}
where $\mathbb{E}[S_j] = 1/ \mu$.
Moreover, we can express the waiting time of update $j$ as the remaining system time of the previous update minus the elapsed time between the generation of updates $(j-1)$ and $j$, i.e.,
\begin{equation}
W_j = (T_{j-1} - Y_j)^+.
\label{eq:W_i}
\end{equation}
Note that if the queue is empty then $W_j = 0$.
Also note that when the system reaches steady state the system times are stochastically identical, i.e., $T =^{st} T_{j-1} =^{st} T_j$.

In addition, the queue of the $i$th node can be described through a discrete-time Markov chain (DTMC), where each state represents the number of packets in the queue. 

\begin{lemma}\label{lemma1}
	From the DTMC described in Fig.~\ref{fig:markov_chain_1} we obtain the following steady state probabilities
    \begin{equation}
    \pi_n = \rho^{n-1} \pi_1, \quad n \geq 1   \quad \text{and} \quad \pi_0 = \frac{\mu(1-\lambda)}{\lambda} \pi_1,
    \label{eq:pi_i_geo}
    \end{equation}
	where $\rho = \frac{\lambda(1-\mu)}{\mu(1-\lambda)}$, $\pi_1 = \frac{\lambda(1-\rho)}{\mu}$, $r=\lambda(1-\mu)$, and $s=\mu(1-\lambda)$.
\end{lemma}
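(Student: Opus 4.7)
The plan is to treat the chain in Fig.~\ref{fig:markov_chain_1} as a birth--death chain on $\mathbb{N}_0$, whose up-transition from state $0$ has probability $\lambda$, whose up-transitions from states $n\geq 1$ have probability $r=\lambda(1-\mu)$, and whose down-transitions to state $n\geq 0$ from state $n+1$ have probability $s=\mu(1-\lambda)$. For any irreducible birth--death chain, global balance collapses to local balance across each ``cut'' between consecutive states, so the stationary probabilities will be obtained from a pair of one-step balance equations plus a normalization.

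First I would write the cut between states $0$ and $1$, namely $\pi_0\,\lambda=\pi_1\,s$, which immediately gives
\begin{equation*}
\pi_0=\frac{s}{\lambda}\,\pi_1=\frac{\mu(1-\lambda)}{\lambda}\,\pi_1,
\end{equation*}
matching the second formula in the claim. Then for $n\geq 1$ the cut between states $n$ and $n+1$ yields $\pi_n\,r=\pi_{n+1}\,s$, so $\pi_{n+1}=\rho\,\pi_n$ with $\rho=r/s=\lambda(1-\mu)/(\mu(1-\lambda))$; iterating from $\pi_1$ gives $\pi_n=\rho^{\,n-1}\pi_1$ for all $n\geq 1$, which is the first formula.

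Finally I would pin down $\pi_1$ by normalization. Under the stability assumption $\lambda<\mu$ (implicitly used in \eqref{eq:Delta_avglimit}) one has $\rho<1$, hence the geometric sum $\sum_{n\geq 1}\pi_n=\pi_1/(1-\rho)$ converges, and the condition $\pi_0+\sum_{n\geq 1}\pi_n=1$ becomes
\begin{equation*}
\pi_1\left(\frac{\mu(1-\lambda)}{\lambda}+\frac{1}{1-\rho}\right)=1.
\end{equation*}
Substituting the simplification $1-\rho=(\mu-\lambda)/(\mu(1-\lambda))$ collapses the bracket to $\mu^2(1-\lambda)/(\lambda(\mu-\lambda))$, and inverting gives $\pi_1=\lambda(\mu-\lambda)/(\mu^2(1-\lambda))=\lambda(1-\rho)/\mu$, as asserted.

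There is no real obstacle here: the only thing one has to be careful about is the boundary at state $0$, since its up-rate $\lambda$ differs from the bulk up-rate $r=\lambda(1-\mu)$ (no service can occur in an empty slot); this asymmetry is exactly what produces the extra factor $s/\lambda$ between $\pi_0$ and $\pi_1$ rather than a single geometric law on all of $\mathbb{N}_0$. The remaining work is the short algebraic simplification of $1-\rho$ needed to recognize the bracket above.
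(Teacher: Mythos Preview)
Your argument is correct. The paper in fact states Lemma~\ref{lemma1} without proof (only Lemma~\ref{lemma2} is proved, in Appendix~\ref{Appendix_A'}); your local-balance derivation for the birth--death chain, followed by normalization under $\rho<1$, is the standard treatment of the late-arrival Geo/Geo/1 queue and mirrors the style of balance-equation computation the authors use for the three-state chain of Lemma~\ref{lemma2}.
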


To derive the probability mass function (pmf) of the system time $T$, we use the fact that the sum of $N$ geometric random variables where $N$ is geometrically distributed is also geometrically distributed, according to the convolution property of their generating functions \cite{Nelson2013probability}.
Let $S_j$, $j = 1,2, .. $ be independent and identically distributed geometric random variables with parameter $\mu$.
If an arriving packet sees $N$ packets in the system, then, the system time of that packet, using the memoryless property, can be written as the random sum $T = S_1 + \dots + S_{N}$.
To calculate the probability generating function of $T$ we condition on $N=n$ which occurs with probability $(1-\rho)\rho^{n-1}$ and obtain 
\begin{align}
G_T(z) &= \sum_{n=1}^{\infty} \left( \frac{\mu z}{1-(1-\mu)z}\right) ^n (1-\rho) \rho^{n-1} = \notag \\
&=   \frac{\mu (1-\rho) z}{1-(1-\mu (1-\rho))z}.
\label{eq:G_T}
\end{align}
This implies that the system time pmf is given by 
\begin{equation}
f_T(t)= \mu (1-\rho) (1-\mu+\mu \rho)^{t-1}.
\label{eq:f_T}
\end{equation}
Hence, $T$ follows a geometric distribution with parameter $\mu (1-\rho)$. 
An alternative approach that uses moment generating functions can also be found in \cite{Talak18_arXiv_Mobihoc}.

Now we are able to compute the conditional expectation of the waiting time $W_j$ given $Y_j=y$ as
\begin{align}
\mathbb{E}[&W_j|Y_j=y] = \mathbb{E}[(T_{j-1}-y)^+|Y_j=y] = \mathbb{E}[(T-y)^+] = \notag \\
& = \sum_{t=y}^\infty (t-y) f_T(t) =
\frac{(1-\mu+\mu\rho)^y}{\mu (1-\rho)}.
\label{eq:Wi_cond_Yi}
\end{align}
Then, the expectation $\mathbb{E}[W_j Y_j]$ is obtained as
\begin{align}
\mathbb{E}[W_j Y_j] &=  \sum_{y=0}^\infty y\: \mathbb{E}[W_j | Y_j=y] \: f_{Y_j}(y) = \notag \\ & = \frac{ \lambda (1-\mu+\mu \rho)}{\mu (1-\rho) (\lambda+\mu-\lambda \mu-\mu \rho+\lambda \mu \rho)^2}.
\label{eq:expWiYi}
\end{align}
Substituting $\rho = \frac{\lambda(1-\mu)}{\mu(1-\lambda)}$ to  \eqref{eq:expWiYi} and after some algebra we obtain
\begin{equation}
\mathbb{E}[W_j Y_j] = \frac{ \lambda (1-\mu)}{(\mu- \lambda) \mu^2}.
\label{eq:expWiYi2}
\end{equation}
From \eqref{eq:expWiYi2}, \eqref{eq:expTiYi}, and \eqref{eq:av_prop}, the average AoI of the $i$th node is obtained as
\begin{equation}
\Delta_{\text{Geo/Geo/1}}  = \frac{1}{\lambda}+\frac{1-\lambda}{ \mu-\lambda}-\frac{\lambda}{\mu^2}+\frac{\lambda}{\mu}.
\label{eq:Delta}
\end{equation}

In order to find the optimal value of $\lambda$ that minimizes the average AoI we proceed as follows.
We differentiate \eqref{eq:Delta} with respect to $\lambda$ to obtain $\frac{\partial \Delta}{\partial \lambda}$. 
By setting $\frac{\partial \Delta}{\partial \lambda}=0$ we can obtain the value of $\lambda$ that minimizes the AoI and satisfies the equation $\lambda^4 (-1+\mu) - 2 \lambda^3 (-1+\mu) \mu- \lambda^2 \mu^2 +2 \lambda \mu^3 -\mu^4 = 0$.
Trivially one can see that $\Delta$ is a convex function of $\lambda$ for a given service rate $\mu$, if $\lambda<\mu$ is not violated, by taking the second derivative $\frac{\partial^2 \Delta}{\partial \lambda^2}$.

\subsection{Queue with Replacement}
Next, the queue with replacement at the $i$th node can be described as a three-state discrete-time Markov chain where each state represents an empty system, a single packet receiving service, or a packet in the queue waiting for a packet in the server, respectively, as in \cite{Costa16}. 
The packet replacement does not affect the number of packets in the system since a newly generated packet discards the packet waiting in the queue, if any.

\begin{lemma}\label{lemma2}
       From the DTMC described in Fig.~\ref{fig:markov_chain_2} we obtain the following steady state probabilities
       \begin{equation}
       \pi_n = \frac{\lambda^n (1-\mu)^{n-1}}{\mu^n (1-\lambda)^n} \pi_0, \quad  n \in \{1,2\},  
       \label{eq:pi_n}
       \end{equation}
       \begin{equation}
       \text{and} \quad \pi_0 = \frac{\lambda-\mu}{\lambda \rho^2 -\mu},
       \label{eq:pi_0}
       \end{equation}
       where $\rho = \frac{\lambda(1-\mu)}{\mu(1-\lambda)}$, $r=\lambda(1-\mu)$, and $s=\mu(1-\lambda)$.
\end{lemma}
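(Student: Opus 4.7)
The plan is to derive the stationary distribution from the local-balance equations of the birth-death DTMC in Fig.~\ref{fig:markov_chain_2} and then impose normalization.

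First, I would read off the one-step probabilities from the chain: the transition $0 \to 1$ has probability $\lambda$; from state $1$ the upward probability is $r = \lambda(1-\mu)$, the downward probability is $s = \mu(1-\lambda)$, and the self-loop $1-r-s$ collects the two compatible events ``no arrival and no departure'' and ``arrival together with service completion''; state $2$ is a sink against arrivals, because of replacement, so the only exit is a service completion with no concurrent arrival, which happens with probability $s$, while the self-loop probability at $2$ is $1-s$ (covering the remaining three arrival/service outcomes, two of which leave one packet waiting after the replacement).

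Second, since the resulting chain is birth-death, the two cut equations across the $0$--$1$ and $1$--$2$ boundaries,
\begin{equation*}
\lambda\,\pi_0 = s\,\pi_1, \qquad r\,\pi_1 = s\,\pi_2,
\end{equation*}
immediately give $\pi_1 = (\lambda/s)\,\pi_0$ and $\pi_2 = (r/s)\,\pi_1 = \rho\,\pi_1$. Substituting the definitions of $r$ and $s$ and iterating yields
\begin{equation*}
\pi_n = \frac{\lambda^n (1-\mu)^{n-1}}{\mu^n (1-\lambda)^n}\,\pi_0, \qquad n \in \{1,2\},
\end{equation*}
which is exactly \eqref{eq:pi_n}.

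Third, I would compute $\pi_0$ from the normalization $\pi_0 + \pi_1 + \pi_2 = 1$. Using the convenient reformulation $\pi_1 = \rho\,\pi_0/(1-\mu)$ and $\pi_2 = \rho^2\,\pi_0/(1-\mu)$, this gives
\begin{equation*}
\pi_0 = \frac{1-\mu}{\,1-\mu + \rho + \rho^2\,}.
\end{equation*}
The main obstacle, though purely algebraic, is verifying that this coincides with the compact form $\pi_0 = (\lambda-\mu)/(\lambda\rho^2-\mu)$ stated in \eqref{eq:pi_0}. I would cross-multiply and use the defining identity $\mu(1-\lambda)\rho = \lambda(1-\mu)$ to cancel factors of $\rho$ and match the two expressions. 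Finally, I would note that the stability condition $\lambda<\mu$ (equivalently $\rho<1$) makes both the numerator and the denominator in \eqref{eq:pi_0} negative, so $\pi_0 \in (0,1)$ as required, and together with the expressions for $\pi_1,\pi_2$ this completes the proof.
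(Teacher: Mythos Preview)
Your proposal is correct and follows essentially the same approach as the paper: write the balance equations for the three-state birth--death chain, solve for $\pi_1,\pi_2$ in terms of $\pi_0$, and normalize. The only cosmetic difference is that you use the two detailed-balance (cut) equations directly, whereas the paper uses the $0$--$1$ cut together with the full balance at state~$1$; for a birth--death chain these are equivalent, and your added explanation of the transition probabilities and the stability check are welcome extras that the paper omits.
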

\begin{proof} See Appendix~\ref{Appendix_A'}. \end{proof}

	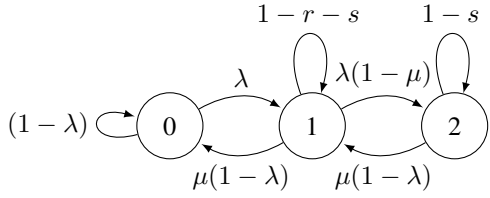
\begin{figure}[t!]
		\centering
		 \begin{tikzpicture}
\node[state]             (s) {0};
\node[state, right=of s] (r) {1};
\node[state, right=of r] (r2) {2};

\draw[every loop, auto=left]
(s) edge[loop left] node {$(1-\lambda)$} (s)
(s) edge[bend left] node {$\lambda$} (r)
(r) edge[bend left] node {$\lambda(1-\mu)$} (r2);

\draw[every loop,auto=right]
(r2) edge[bend left, below] node {$\mu(1-\lambda)$} (r)
(r) edge[bend left, below] node {$\mu(1-\lambda)$} (s);           

\draw[] (r) edge[out=110, in=75
, looseness=0.8, loop
, distance=1.1cm, ->] node[above=0.5pt] {$1-r-s$} (r);

\draw[] (r2) edge[out=110, in=75
, looseness=0.8, loop
, distance=1.1cm, ->] node[above=0.5pt] {$1-s$} (r2);

\end{tikzpicture}    
		\caption{The DTMC which models the evolution of the queue \emph{with replacement} at node $i$.}
		\label{fig:markov_chain_2}
\end{figure}

To calculate the average AoI of node $i$ at the destination for the replacement queue discipline we use \eqref{eq:Delta_v2} and describe an event such that $Z_j$ and $T_{j-1}$ are conditionally independent. 
In general, the inter-reception time $Z_j$ depends on the system time $T_{j-1}$ of the previous packet in the system and this complicates the analysis of their joint distribution. 
We denote by $\psi_j$ the event that the system is empty after the $j$th successful transmission.
Furthermore, let $\bar{\psi}_{j}$ be the complementary event that the $j$th packet leaves behind a system with a packet waiting in the queue.
The normalized probabilities of these events are given by
\begin{equation}
\mathbb{P} (\psi_{j}) = \frac{\pi_0}{\pi_0+\pi_1} = \frac{\mu-\lambda\mu}{\lambda+\mu-\lambda\mu},
\label{eq:P_psi}
\end{equation}
\begin{equation}
\mathbb{P} (\bar{\psi}_{j}) = \frac{\pi_1}{\pi_0+\pi_1} = \frac{\lambda}{\lambda+\mu-\lambda\mu}.
\label{eq:P_psi_bar}
\end{equation}
Then, the expectations $\mathbb{E}[ZT]$,  $\mathbb{E}[Z]$, and $\mathbb{E}[Z^2]$, in \eqref{eq:Delta_v2} can be calculated by conditioning on the events $\psi_j$ and $\bar{\psi}_{j}$. 

The inter-reception time of the $j$th packet given that the $(j-1)$th packet leaves behind an empty system is given by the convolution of two independent geometric random variables that represent the interarrival time of update $j$ and the service time of the same update. 
Hence,  
\begin{align}
&\mathbb{P} \{Z_j=z|\psi_{j-1}\} = \sum_{k=1}^{z-1} \mathbb{P} \{Y_j=k\} \mathbb{P} \{S_j=z-k\} = \notag \\
&=\frac{\lambda \mu}{\mu - \lambda} \left[  (1-\lambda)^{z-1} - (1-\mu)^{z-1} \right],
\label{eq:pmf_Zi1}
\end{align}
\begin{equation}
\mathbb{E}[Z_j | \psi_{j-1}] = \frac{\lambda+\mu}{\lambda\mu},
\label{eq:exp_Zi1}
\end{equation}
\begin{equation}
\mathbb{E}[Z_j^2 | \psi_{j-1}] = \frac{2\lambda^2 + 2 \lambda\mu -\lambda^2 \mu +2\mu^2 - \lambda\mu^2}{\lambda^2 \mu^2}.
\label{eq:exp_Zi12}
\end{equation}
Moreover, in case there is a packet waiting in the queue that starts service as soon as packet $(j-1)$ completes service, we have 
\begin{equation}
\mathbb{P} \{Z_j=z|\bar{\psi}_{j-1}\} = \mu \:(1-\mu)^{z-1},
\label{eq:pmf_Zi2}
\end{equation}
\begin{equation}
\mathbb{E}[Z_j | \bar{\psi}_{j-1}] = \frac{1}{\mu},
\label{eq:exp_Zi2}
\end{equation}
\begin{equation}
\mathbb{E}[Z_j^2 | \bar{\psi}_{j-1}] = \frac{2-\mu}{\mu^2}.
\label{eq:exp_Zi22givenpsibar}
\end{equation}
Then, the last two terms in \eqref{eq:Delta_v2} can be obtained as
\begin{align} 
\mathbb{E}[Z_j] & = \mathbb{E}[Z_j | \psi_{j-1}] \mathbb{P}(\psi_{j-1}) + \mathbb{E}[Z_j | \bar{\psi}_{j-1}] \mathbb{P}(\bar{\psi}_{j-1})  = \notag \\
&= \frac{\lambda+\mu}{\lambda\mu} \frac{\mu-\lambda\mu}{(\lambda+\mu-\lambda\mu)} 
+ \frac{1}{\mu} \frac{\lambda}{(\lambda+\mu-\lambda\mu)} = \notag \\
&= \frac{\lambda^2(1-\mu)+\lambda(1-\mu)\mu+\mu^2}{\lambda \mu (\lambda+\mu-\lambda\mu)},
\label{eq:exp_Zi}
\end{align}
and
\begin{align}
\mathbb{E}[Z_j^2]  &= \mathbb{E}[Z_j^2 | \psi_{j-1}] \mathbb{P}(\psi_{j-1}) + \mathbb{E}[Z_j^2 | \bar{\psi}_{j-1}] \mathbb{P}(\bar{\psi}_{j-1})  = \notag \\
&=  \frac{2\lambda^2 + 2 \lambda\mu -\lambda^2 \mu +2\mu^2 - \lambda\mu^2}{\lambda^2 \mu^2} \frac{\mu-\lambda\mu}{(\lambda+\mu-\lambda\mu)} +  \notag \\ 
&+ \frac{2-\mu}{\mu^2} \frac{\lambda}{(\lambda+\mu-\lambda\mu)}.
\label{eq:exp_Zi22}
\end{align}

To derive the conditional distributions of service time given the events $\psi_{j-1}$ and $\bar{\psi}_{j-1}$ we note that the $(j-1)$th packet leaves behind an empty system if and only if zero arrivals occur while it is being served.
Then, the conditional distribution of service time given the event $\psi_{j-1}$, where $f_S(\cdot)$ is the service time pmf, is given by

\begin{align}
&\mathbb{P} \{S_{j-1}=k|\psi_{j-1}\}  = \frac{\mathbb{P}(\psi_{j-1} | S_{j-1}=k) f_S(k)}{\sum_{k=1}^{\infty} \mathbb{P}(\psi_{j-1} | S_{j-1}=k) f_S(k)}= \notag \\
&= \frac{ \binom{k}{0} (1-\lambda)^{k} \mu (1-\mu)^{k- 1}}{\sum_{k=1}^{\infty} \binom{k}{0} (1-\lambda)^{k} \mu (1-\mu)^{k- 1}} = \notag \\
&= ((1-\lambda)(1-\mu))^{k-1} (\lambda+\mu-\lambda\mu),
\label{eq:f_S_given_psi}
\end{align}
with the resulting conditional expectation

\begin{equation}
\mathbb{E}[S_{j-1} | \psi_{j-1}] = \frac{1}{\lambda+\mu-\lambda\mu}.
\label{eq:exp_SgivenPsi}
\end{equation}
For the complementary event $\bar{\psi}_{j-1}$ the conditional distribution of the service time is given by 
\begin{align}
&\mathbb{P} \{S_{j-1}=k|\bar{\psi}_{j-1}\}  = \frac{\mathbb{P}(\bar{\psi}_{j-1} | S_{j-1}=k) f_S(k)}{\sum_{k=1}^{\infty} \mathbb{P}(\bar{\psi}_{j-1} | S_{j-1}=k) f_S(k)}= \notag \\
&= \frac{ (1 - \binom{k}{0} (1-\lambda)^{k}) \mu (1-\mu)^{k-1} }{\sum_{k=1}^{\infty} (1 - \binom{k}{0} (1-\lambda)^{k}) \mu (1-\mu)^{k-1} } = \notag \\
&=\frac{ (1 - (1-\lambda)^{k}) \mu (1-\mu)^{k-1} (\lambda+\mu-\lambda\mu)}{\lambda},
\label{eq:f_S_given_psi_bar}
\end{align}
with the resulting conditional expectation

\begin{align}
\mathbb{E}[S_{j-1} | \bar{\psi}_{j-1}] &= \frac{ \lambda (1-\mu)^2 +(2-\mu)\mu}{ \mu (\lambda+\mu-\lambda\mu)} = \notag \\ 
&= \frac{1}{\lambda+\mu-\lambda\mu} + \frac{1}{\mu} -1.
\label{eq:exp_SgivenPsi_bar}
\end{align}

We proceed with the characterization of the waiting time for transmitted packets via considering the events of transmission (tx) or replacement (drop). 
We consider two possible server states of node $i$, either idle or busy.
A packet arrival finds the server idle with probability $\mathbb{P}(\text{idle})=\pi_0$, due to the BASTA property (Bernoulli Arrivals See Time Averages) \cite{Cooper}.
A packet arrival finds the server busy with probability $\mathbb{P}(\text{busy})= 1-\pi_0$.
This packet will receive service if and only if zero arrivals occur while the packet in the server is transmitted.
Let $R$ represent the remaining service time of an update, with pmf $f_R(r)$, and let $\phi$ be the event that zero arrivals occur during the remaining service time.
For every measurable set $A \subset [0,\infty)$, we define the probability
\begin{equation}
\mathbb{P}(\phi , R \in A) = \sum_{r \in A} \mathbb{P}(\phi | R=r) f_R(r).
\label{eq:pr_phi_R_MM12star}
\end{equation}
Then, the probability of transmission conditioned on the event that the server is busy is given by
\begin{align}
\mathbb{P}(\text{tx}| \text{busy})& = \sum_{r=0}^{\infty} \mathbb{P}(\phi | R=r) f_R(r) =\notag \\
&=  \sum_{r=1}^{\infty} \binom{r}{0} (1-\lambda)^{r} \mu (1-\mu)^{r- 1} = \notag \\
&= \frac{\mu - \lambda \mu}{ \lambda+\mu-\lambda\mu}.
\label{eq:pr_tx_given_busy_MM12star}
\end{align}
As a result,
\begin{equation}
\mathbb{P}(\text{busy}, \text{tx}) =  (1-\pi_0) \frac{\mu-\lambda \mu}{ \lambda+\mu-\lambda\mu}.
\label{eq:pr_busy_tx}
\end{equation}

The distribution of the waiting time conditioned on the event $\{ \text{busy, tx} \}$ is given by
\begin{align}
&f(w | \text{busy},\text{tx}) = f(r | \phi) = \frac{\mathbb{P}(\phi | R=r) f_R(r)}{\sum_{r=0}^{\infty} \mathbb{P}(\phi | R=r) f_R(r)}= \notag \\
&= \frac{ \binom{r}{0} (1-\lambda)^{r} \mu (1-\mu)^{r- 1}}{\sum_{r=1}^{\infty} \binom{r}{0} (1-\lambda)^{r} \mu (1-\mu)^{r- 1}} = \notag \\
&=  \left[ (1-\lambda)(1-\mu)\right]^{r-1} (\lambda+\mu-\lambda\mu) = \notag \\
&=  \left[ 1- (\lambda+\mu-\lambda\mu)\right]^{r-1} (\lambda+\mu-\lambda\mu).
\label{eq:f_w_given_busy_tx}
\end{align}
Hence, $W$ conditioned on the event $\{ \text{busy, tx} \}$ is geometrically distributed with parameter $(\lambda+\mu-\lambda\mu)$.

Finally, using \eqref{eq:f_w_given_busy_tx} the expected value of the waiting time for a transmitted packet is obtained as

\begin{align}
	\mathbb{E}[W | \text{tx}] &= \frac{ (1-\lambda) \lambda (1-\mu) (\mu+\lambda-2\lambda\mu)}{ \lambda^2(\mu-1)^2+\lambda(1-2\mu)\mu+\mu^2} \times \notag \\ 
	& \times \frac{1}{(\lambda+\mu-\lambda\mu) }.
	\label{eq:exp_w_given_busy_tx}
\end{align}

Next, given the conditional expectations of the service time \eqref{eq:exp_SgivenPsi} and \eqref{eq:exp_SgivenPsi_bar}, and the expectation of the waiting time \eqref{eq:exp_w_given_busy_tx}, we calculate the conditional expectations of the system time as follows
\begin{align}
\mathbb{E}[T _{j-1} | \psi _{j-1}] &= \mathbb{E}[W _{j-1} | \psi _{j-1}] + \mathbb{E}[S _{j-1} | \psi _{j-1}]= \notag \\
&=\mathbb{E}[W _{j-1}] + \mathbb{E}[S_{j-1} | \psi _{j-1}]= \notag \\
&= \frac{1+ \frac{(1-\lambda)\lambda(1-\mu)(\lambda+\mu-2\lambda\mu)}{\lambda^2(\mu-1)^2+\lambda(1-2\mu)\mu+\mu^2}}{(\lambda +\mu-\lambda \mu)},
\label{eq:exp_T_given_psi}
\end{align}
\begin{align}
&\mathbb{E}[T _{j-1} | \bar{\psi}_{j-1}] = \mathbb{E}[W _{j-1} | \bar{\psi}_{j-1}] + \mathbb{E}[S _{j-1} | \bar{\psi}_{j-1}]= \notag \\
&=\mathbb{E}[W _{j-1}] + \mathbb{E}[S_{j-1} | \bar{\psi}_{j-1}]= \notag \\
&= \frac{ (1-\lambda) \lambda (1-\mu) (\mu+\lambda-2\lambda\mu)}{(\lambda+\mu-\lambda\mu) (\lambda^2(\mu-1)^2+\lambda(1-2\mu)\mu+\mu^2)} +  \notag \\ &\quad+\frac{1}{\lambda+\mu-\lambda\mu} +\frac{1}{\mu} -1.
\label{eq:exp_T_given_psibar}
\end{align}

Utilizing the probabilities \eqref{eq:P_psi}, \eqref{eq:P_psi_bar}, the conditional expectations of the system time \eqref{eq:exp_T_given_psi}, \eqref{eq:exp_T_given_psibar}, and the conditional expectations of the inter-reception time \eqref{eq:exp_Zi1}, \eqref{eq:exp_Zi2}, we calculate $\mathbb{E}[T_{j-1} Z_j]$ as follows
\begin{align}
\mathbb{E}[T_{j-1} Z_j]&
= \mathbb{P}(\psi_{j-1}) (\mathbb{E}[Z_j |\psi_{j-1}] \mathbb{E}[T_{j-1} |\psi_{j-1}]) \notag \\
&\quad \quad + \mathbb{P}(\bar{\psi}_{j-1}) (\mathbb{E}[Z_j |\bar{\psi}_{j-1}] \mathbb{E}[T_{j-1}|\bar{\psi}_{j-1}]) =\notag \\  
&=\frac{1}{\mu^2}+\frac{1-\lambda}{\lambda\mu}-\frac{1+\lambda}{ (\lambda+\mu-\lambda\mu)^2}+ \frac{1+2\lambda}{\lambda+\mu-\lambda\mu}+ \notag \\ 
&+\frac{\lambda(1-2\mu+\lambda(3\mu-2))}{\lambda^2 (\mu-1)^2+ \lambda(1-2\mu) \mu +\mu^2}.
\label{eq:Timinus1Zi}
\end{align}

We refer to the time average rate of packets that enter and remain in the system as the \emph{effective rate} and define it as 
\begin{align}
\lambda_e &= \lambda (1-p_D) = \notag \\
&= \lambda - \lambda \frac{\lambda^2 (1-\mu)}{\lambda^2 (1-\mu) +\lambda(1-\mu)\mu+\mu^2},
\label{eq:effective_rate}
\end{align}
where $p_D$ is the packet dropping probability

\begin{equation}
p_D = \frac{\lambda^n (1-\mu)}{\mu^n (1-\lambda)}  
\left(1+\frac{\lambda}{\mu (1-\lambda)} +\frac{\lambda^n (1-\mu)^{n-1}}{\mu^n (1-\lambda)^{n-1}} \right)^{-1}
\end{equation}
for $n=2$.


Finally, using \eqref{eq:exp_Zi}, \eqref{eq:exp_Zi22}, \eqref{eq:Timinus1Zi}, \eqref{eq:effective_rate}, and \eqref{eq:Delta_v2}, the average age of information of node $i$ for the replacement discipline is calculated as shown in \eqref{eq:av_Delta_replace}.
We recall that the analysis provided herein does not consider any coupling between the transmission queues but instead focuses on the AoI performance of an independent queue.
Such a step would require knowing the stationary probability distribution of the joint queue length process.
We proceed in the next section by detailing the three proposed access policies and evaluating them through simulations.

\begin{floatEq}
	\begin{align} 
	\Delta_{\text{replacement}} &= \frac{1}{\lambda^2 (1-\mu)+\lambda (1-\mu)\mu+\mu^2} \Bigg( \lambda \mu (\lambda+\mu - \lambda \mu) \bigg(\frac{\lambda^2 (1-\mu)+\lambda (1-\mu) \mu+\mu^2}{2 \lambda \mu (\lambda+\mu -\lambda \mu)}+\frac{\lambda (\lambda (3 \mu-2)-2 \mu+1)}{\lambda^2 (\mu-1)^2+\lambda \mu (1-2 \mu)+\mu^2} \notag \\ 
	&+\frac{\lambda^3 (\mu-2) (\mu-1)+\lambda^2 (\mu-2) (\mu-1) \mu+\lambda \mu^2 (2-3 \mu)+2 \mu^3}{2 \lambda^2 \mu^2 (\lambda +\mu-\lambda\mu)}+\frac{1-\lambda}{\lambda \mu}+\frac{2 \lambda+1}{\lambda+\mu - \lambda \mu}-\frac{\lambda+1}{(\lambda+\mu - \lambda \mu)^2}+\frac{1}{\mu^2} \bigg) \Bigg).
	\label{eq:av_Delta_replace} 
	\end{align}
\end{floatEq}

\section{Simulation results}
\label{sec:scheduled access}
The objective considered in this paper is to minimize the time average AoI over all policies and all nodes.
In that direction, we first investigate all policies without the effect of channel fading through simulations.
We develop a MATLAB-based behavioural simulator where each case runs for $10^6$ timeslots.

\subsection{Round-robin Scheduled Access}
In the round-robin scheduler nodes take turns to transmit their status updates.
If there is no packet at the $i$th queue waiting for transmission then the assigned time slot to source $i$ is wasted with no transmission taking place.
Round-robin is a simple scheduler that does not require dynamic coordination but comes with a throughput loss.
Assuming a fixed scheduling interval, each node is assigned a unique time slot index. 

In Fig.~\ref{fig:age_vs_lambda_round_robin_classic} the average AoI per source is shown as a function of the arrival rate per source without any packet management, for $\lambda_1=\dots=\lambda_N$ and success probability 1, at the destination. 
We observe that the AoI tends to infinity as the arrival rate tends to $1/N$.
This is due to the violation of the stability conditions for the queues implying infinite queueing delay.

\begin{figure}[t!]\centering
	\centering
	\includegraphics[draft=false,scale=.5]{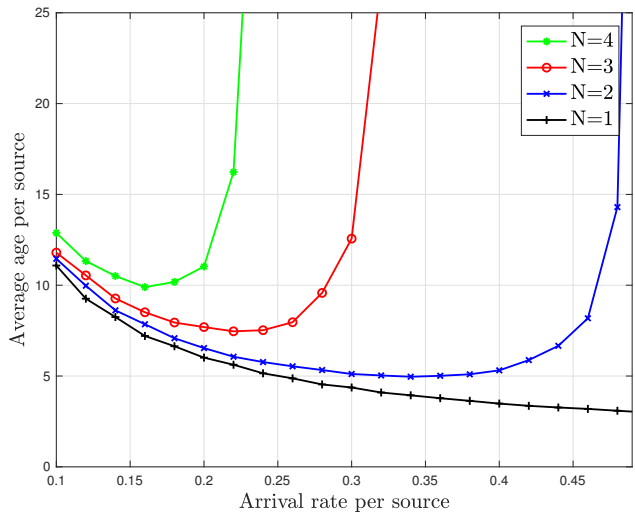}
	\caption{Average age per source vs. the arrival rate $\lambda_i$ for the round-robin scheduler without packet management at the transmission queues.}
	\label{fig:age_vs_lambda_round_robin_classic}
\end{figure}

In Fig.~\ref{fig:age_vs_lambda_round_robin_replace} the average AoI per source is shown as a function of the number of source nodes $N$ with a replacement queue, for $\lambda_1=\dots=\lambda_N$ and success probability 1, at the destination. 
In this case, the AoI is a monotonically decreasing function of the arrival rate.
Moreover, we note that the AoI increases linearly with the number of source nodes $N$.
	The average AoI for the round-robin scheduler with the replacement queue discipline is lower bounded by $\frac{N+3}{2}$, where $N$ the number of source nodes in the system, i.e.,  
	\begin{equation}
	\Delta_i \geq \frac{N+3}{2}, \quad \forall i \in \{1,\cdots,N\},  \quad \lambda_i \in  ( 0,1 ).
	\label{eq:Delta_min_RR}
	\end{equation}

\begin{figure}[t!]\centering
	\centering
	\includegraphics[draft=false,scale=.5]{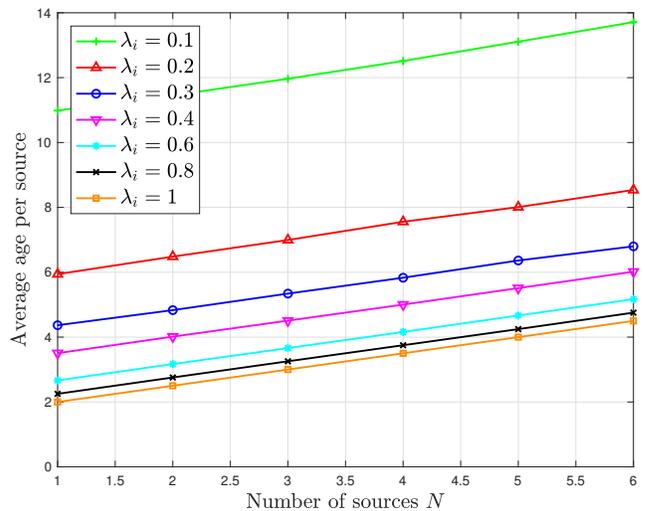}
	\caption{Average age per source vs. the number of sources $N$ for the round-robin scheduler with the replacement queue discipline.}
	\label{fig:age_vs_lambda_round_robin_replace}
\end{figure}

\subsection{Work-conserving Scheduled Access}

The work-conserving scheduler makes probabilistic decisions among the nodes that have a packet at the transmission queue at the same time slot.	
Specifically, source $i$ is assigned the given time slot with probability $1/\tilde{N}$, where $\tilde{N}$ is the number of sources that have a packet available for transmission.
A time slot is wasted with no transmission taking place only when we have an empty system.

In Fig.~\ref{fig:age_vs_lambda_work_conserving_replace} the average AoI per source is shown as a function of the number of source nodes $N$ with a replacement queue discipline, for $\lambda_1=\dots=\lambda_N$ and success probability 1, at the destination. 
With solid line we plot the work-conserving scheduler and with dashed line the round-robin scheduler.
The AoI of source $i$ for the work-conserving scheduler is a monotonically decreasing function of the arrival rate $\lambda_i$, similar to the round-robin scheduler.
Moreover, we observe that as $\lambda_i$ decreases, the gap between the performance of the work-conserving scheduler and the round-robin scheduler increases.
For $\lambda_i=1$ when there is always a packet available for transmission the performance of the two schedulers with respect to the AoI metric coincides.
	Therefore, the average AoI for the work-conserving scheduler with the replacement queue discipline is also lower bounded by $\frac{N+3}{2}$ where $N$ the number of source nodes in the system, i.e.,  
	\begin{equation}
	\Delta_i \geq \frac{N+3}{2}, \quad \forall i \in \{1,\cdots,N\},  \quad \lambda_i \in  (0,1).
	\label{eq:Delta_min_WR}
	\end{equation}

\begin{figure}[t!]\centering
	\centering
	\includegraphics[draft=false,scale=.5]{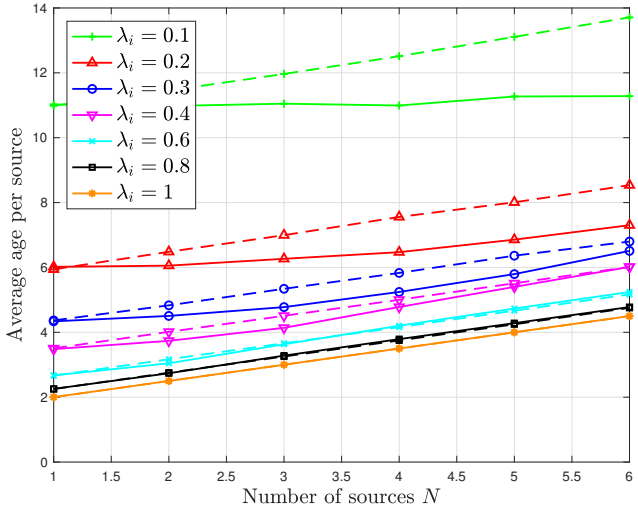}
	\caption{Average age per source vs. the number of sources $N$ for the work-conserving scheduler (solid lines) with the replacement queue discipline. The round-robin scheduler is depicted with dashed lines.}
	\label{fig:age_vs_lambda_work_conserving_replace}
\end{figure}

\subsection{Random Access}
\label{sec:random access}
In the slotted random access policy, at each time slot, node $i$ attempts to transmit the packet at the head of the queue with probability $q_i$, provided that the queue is not empty. 

\begin{figure}[t!]\centering
	\centering
	\includegraphics[draft=false,scale=.5]{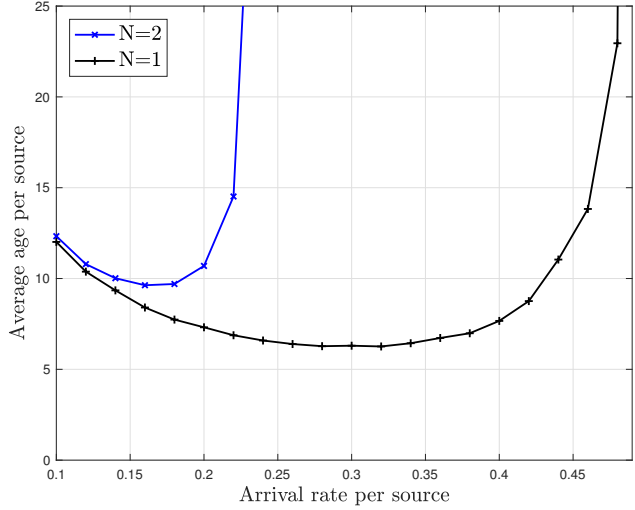}
	\caption{Average age per source vs. the arrival rate $\lambda_i$ for the random access without packet management at the transmission queues.}
	\label{fig:age_vs_lambda_slotted_ALOHA_classic}
\end{figure}

\begin{figure}[t!]\centering
	\centering
	\includegraphics[draft=false,scale=.5]{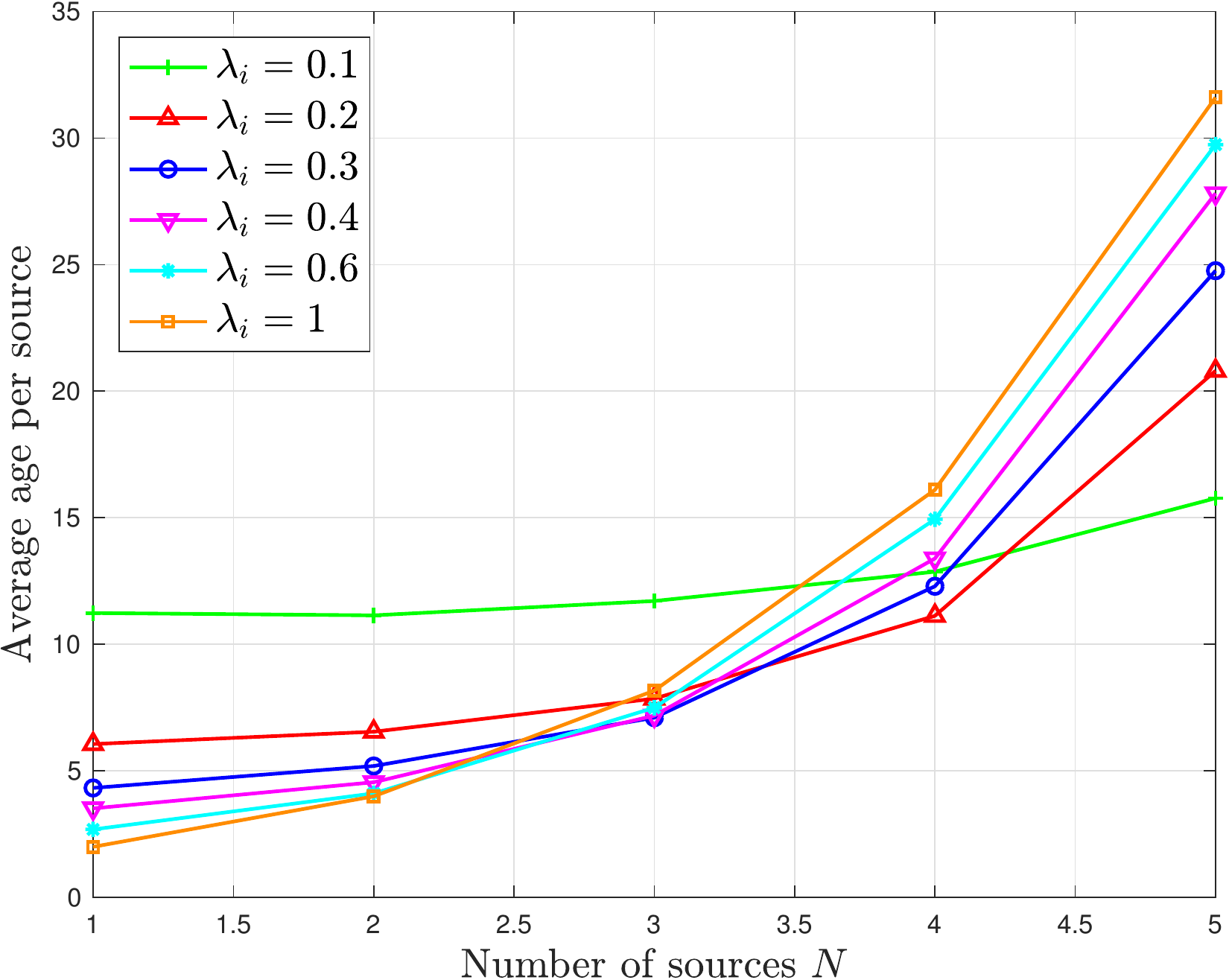}
	\caption{Average age per source vs. the arrival rate $\lambda_i$ for the random access with the replacement queue discipline.}
	\label{fig:age_vs_lambda_slotted_ALOHA_replace}
\end{figure}

\begin{figure}[t!]\centering
	\centering
	\includegraphics[draft=false,scale=.5]{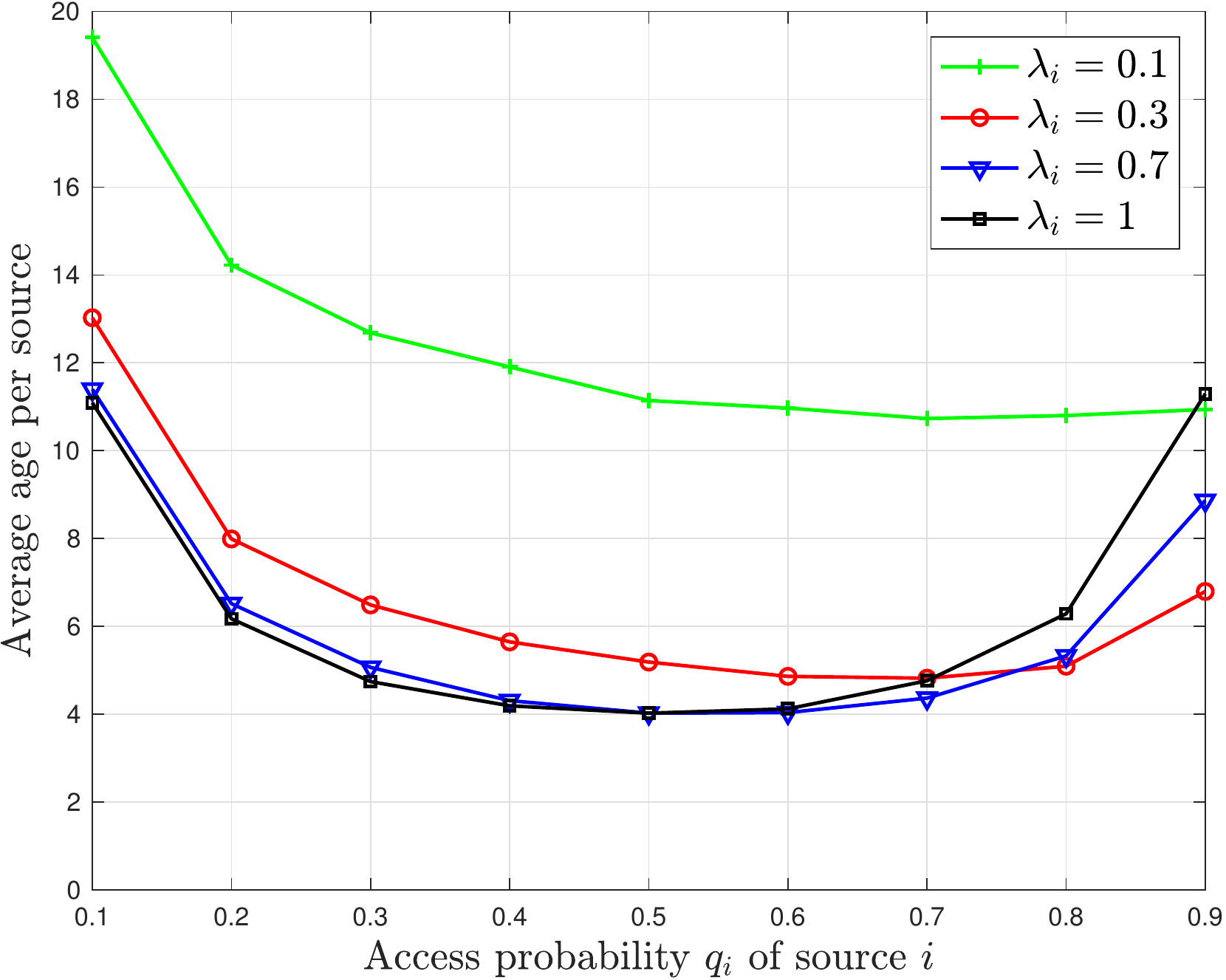}
	\caption{Average age vs. the access probability $q_i$ for the random access with the replacement queue discipline and $N=2$.}
	\label{fig:age_vs_p_slotted_ALOHA_replace}
\end{figure}

In Fig.~\ref{fig:age_vs_lambda_slotted_ALOHA_classic} the average AoI per source is shown as a function of the arrival rate per source without any packet management, for $\lambda_1=\dots=\lambda_N$, $q=q_1=\dots=q_N=0.5$, and a collision channel with success probability $\tilde{N} q (1-q)^{\tilde{N}-1}$, at the destination. 
We observe that the AoI tends to infinity as the arrival rate tends to $(1/N)*q$.
This is due to the violation of the stability conditions for the queues
implying infinite queueing delay.

In Fig.~\ref{fig:age_vs_lambda_slotted_ALOHA_replace} the average AoI per source is shown as a function of the number of source nodes $N$ with a replacement queue, for $\lambda_1=\dots=\lambda_N$, $q=q_1=\dots=q_N=0.5$, and a collision channel with success probability $\tilde{N} q (1-q)^{\tilde{N}-1}$, at the destination. 
We observe that the arrival rate that minimizes the AoI changes depending on the number of sources $N$.
Specifically, for small values of the arrival rate $\lambda_i$ it is preferable to have more source nodes transmitting, while for large values of the arrival rate $\lambda_i$ it is preferable to have few source nodes.
The AoI of source $i$ is a monotonically decreasing function of the arrival rate $\lambda_i$ for $N \in \{1,2\}$.

In Fig.~\ref{fig:age_vs_p_slotted_ALOHA_replace} the average AoI per source is shown as a function of the access probability $q_i$ of source $i$ with a replacement queue, for $N=2$, $\lambda_1=\lambda_2$, and a collision channel, at the destination. 
In this setup, we can find the optimal access probability $q_i$ for various arrival rates $\lambda_i$ and number of nodes $N$.
It is interesting to see that when $q_i$ is small it is better to have a large arrival rate in order to guarantee that there will be packets available for transmission.
On the other hand, for large $q_i$ a small rate is beneficial since the absence of packets reduces the collisions. 

\section{Fading and network path diversity}
\label{sec:comparison}

In this section, we consider the effect of the success probability of a packet erasure model and the effect of the network path diversity on the system, and present how the different parameters affect the system performance.

In particular, we assume that the node considered until now as the destination is an access point (AP).
Packets are sent through wireless channels to the AP and then from the AP they are transmitted through an error free network to the final destination, as shown in Fig.~\ref{fig:system_model2}. 
After the AP we consider a process that captures the \emph{network delay} imposed to packets. 
This is a simplified model of the random delay experienced by a packet after its departure from the AP to the final destination $d$. 
This process can model several cases, such as the delay for contending with other packets in the reception queue, multiple hops, or the processing time at the receiver. 
We model the availability of resources and the network path diversity by assuming an infinite number of servers at the AP.
The network delay process follows a geometric distribution with mean $1/k$, for $0<k<1$, and it causes packets to arrive at the destination $d$ out of order.

\begin{figure}[t!]
	\centering
	\includegraphics[scale=.41]{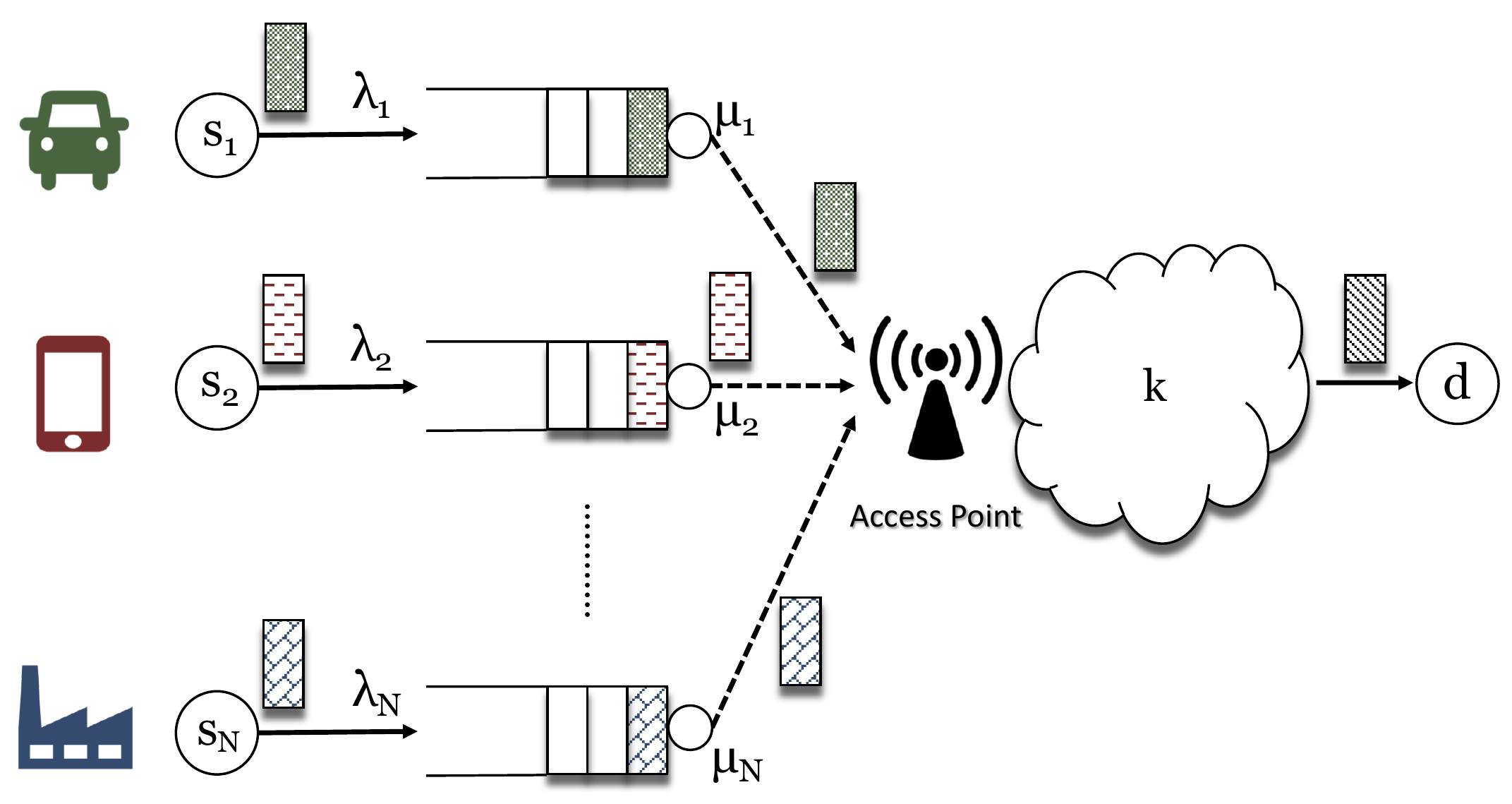}
	\caption{Status updates over a multiaccess network with out- of-order receptions.}
	\label{fig:system_model2}
\end{figure}

The network delay process can cause out of order reception of packets at the destination $d$.
We define an informative packet as a packet that carries the newest information compared to the packets of the same source arriving at the destination prior to it.
A packet $j$ is said to be obsolete if there is at least one packet with $k \geq 1$ of the same source generated after $j$, such that $t'_j > t'_{j+k}$. 
An informative packet is one that is not rendered obsolete.
Obsolete packets correspond to waste of resources since they do not provide fresh information to the destination. 
Thus, it is meaningful to minimize the percentage of obsolete packets among the transmitted packets.

In Fig.~\ref{fig:age_vs_lambda_round_robin_replace_withFading} the average AoI per source is shown as a function of the arrival rate per source node with a replacement queue, for the round-robin scheduler, $\lambda_1= \cdots =\lambda_N$, and success probability $p_1= \cdots =p_N$, at the AP. 
We see that as $p_i$ decreases, the gap between the performance of the system for $N=2$ and $N=3$ increases.
In other words, under good channel conditions adding more source nodes will degrade the AoI performance less compared to the case where the channel conditions are weak.

\begin{figure}[t!]\centering
	\centering
	\includegraphics[draft=false,scale=.5]{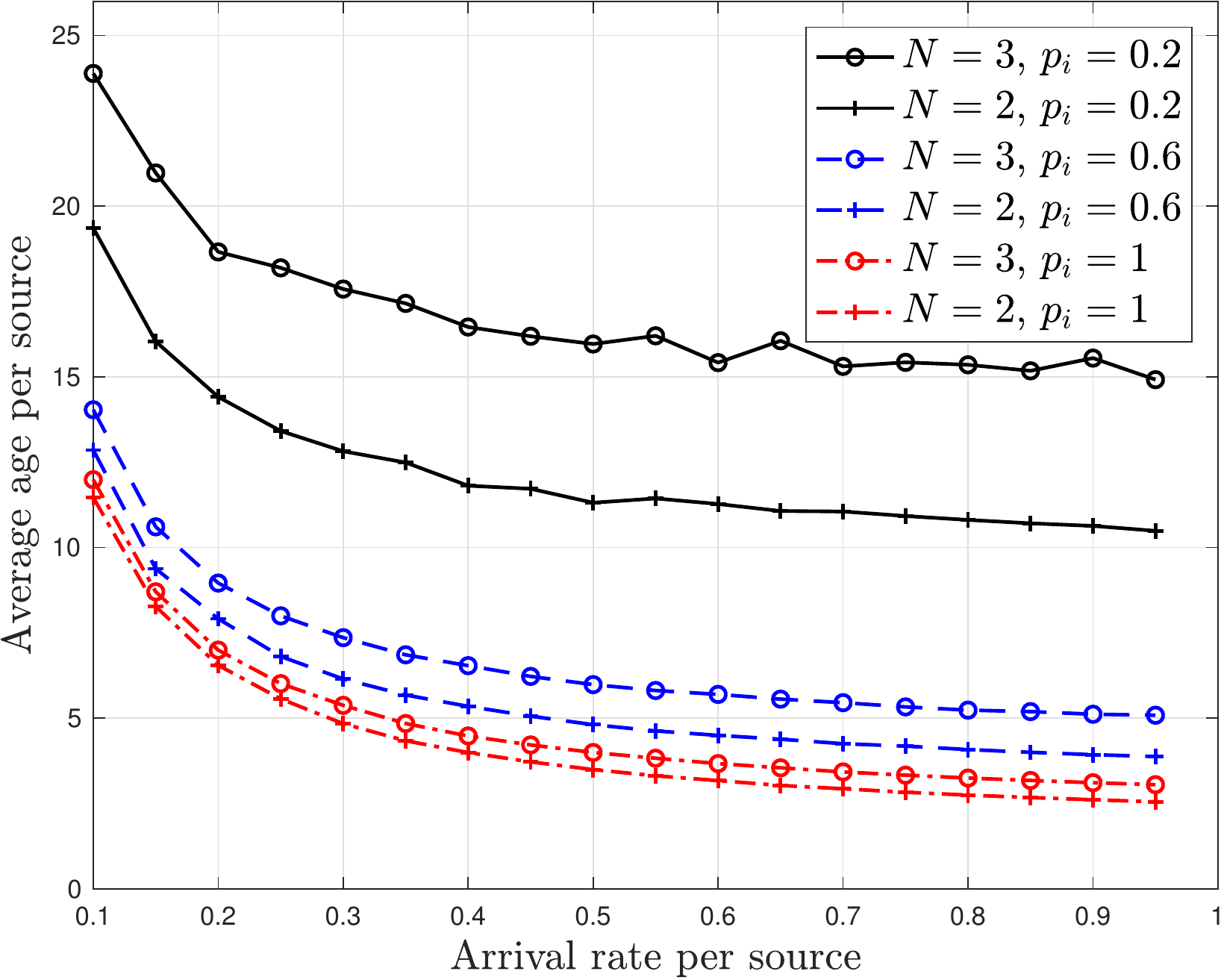}
	\caption{Average age per source vs. the arrival rate $\lambda_i$ for the round-robin scheduler with the replacement queue discipline. The success probability of the $i$th node is $p_i$.}
	\label{fig:age_vs_lambda_round_robin_replace_withFading}
\end{figure}

\begin{figure}[t!]\centering
	\centering
	\includegraphics[draft=false,scale=.5]{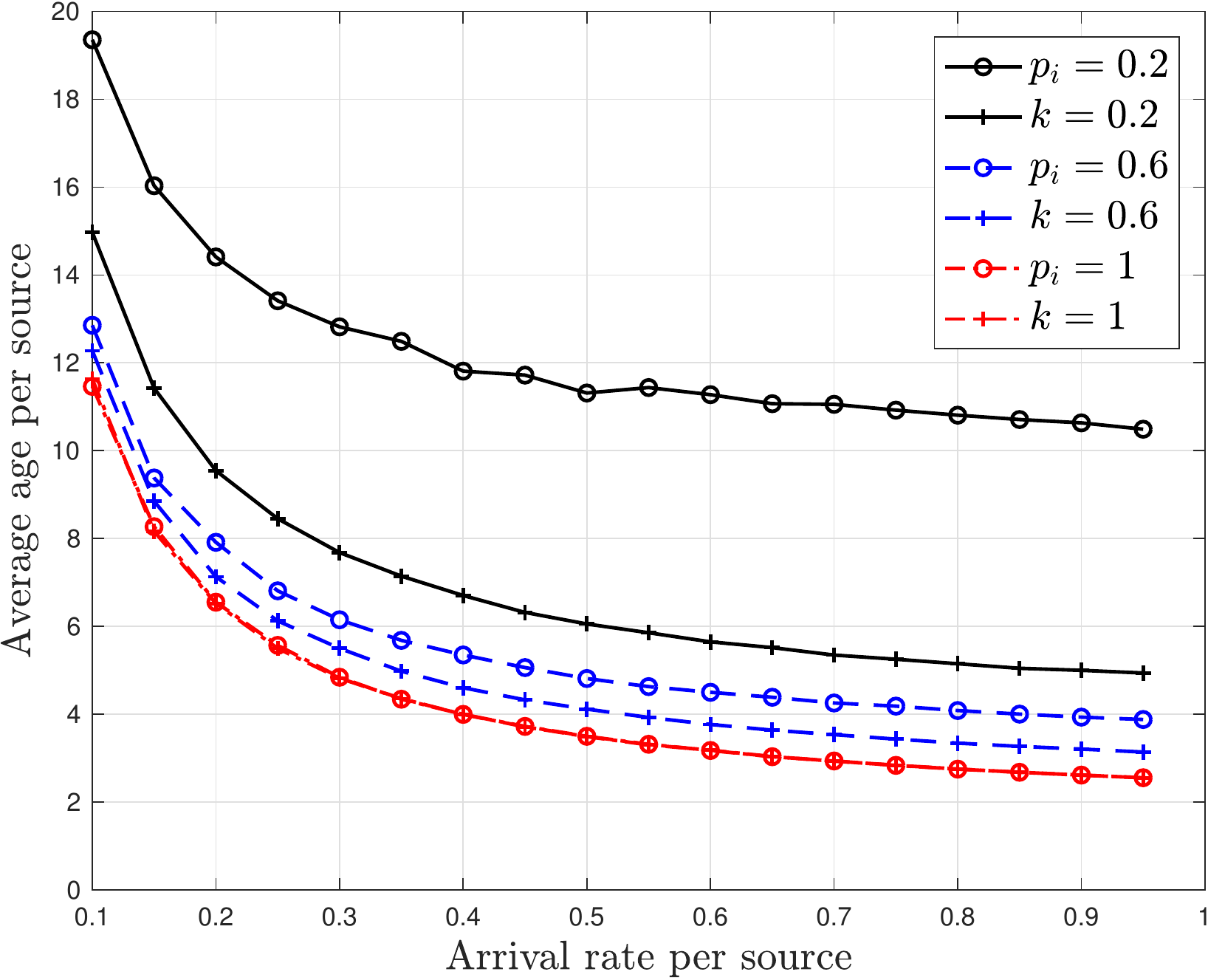}
	\caption{Average age per source vs. the arrival rate $\lambda_i$ for the round-robin scheduler with the replacement queue discipline, for $N=2$. The success probability of the $i$th node is $p_i$ and the network delay parameter is $k$.}
	\label{fig:age_vs_lambda_round_robin_replace_FadingandCloud_comparison}
\end{figure}

In Fig.~\ref{fig:age_vs_lambda_round_robin_replace_FadingandCloud_comparison} we compare the effect of the parameters $p_i$ and $k$ on the AoI objective.
Recall that the network delay process follows a geometric distribution with mean $1/k$.
The average AoI per source is shown as a function of the arrival rate per source node with a replacement queue, for the round-robin scheduler, $N=2$, and $\lambda_1=\lambda_2$.
For the different values of the success probability $p_i$ the AoI is measured at the AP.
For the different values of the parameter $k$ the AoI is measured at the destination $d$, assuming that the transmission to the AP is instantaneous end error-free.
We observe that the effect of the parameter $k$ differs from the effect of the parameter $p_i$.
This is due to the fact that a failure in transmission corresponds not only to a wasted time slot but also to a wasted turn for the source. 

In Fig.~\ref{fig:obsolete_packets_vs_k_round_robin_replace_withCloud} the number of obsolete packets is shown as a function of the network delay parameter $k$ for the round-robin scheduler, $N=2$, $\lambda_1=\lambda_2$, and success probability 1, at the destination $d$. 
As expected, increasing the arrival rate at the source nodes results in more packets that are rendered obsolete.
Hence, there is a tradeoff between the AoI performance and the number of wasted resources in terms of obsolete packets.

\begin{figure}[t!]\centering
	\centering
	\includegraphics[draft=false,scale=.5]{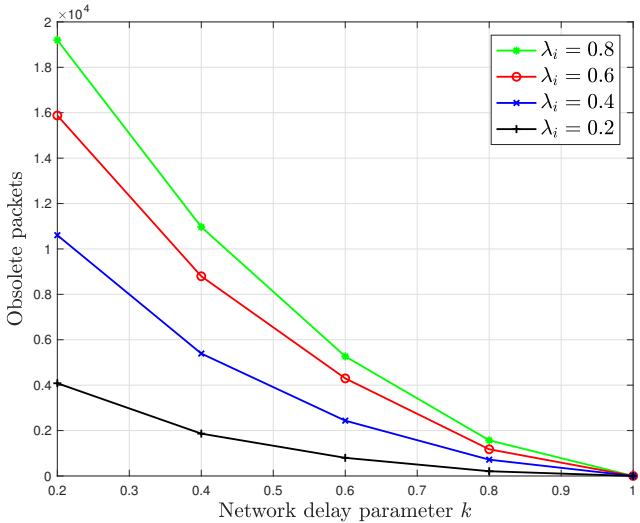}
	\caption{Obsolete packets vs. the network delay parameter $k$ for the round-robin scheduler with the replacement queue discipline, for $N=2$. The time horizon is 100000 time slots.}
	\label{fig:obsolete_packets_vs_k_round_robin_replace_withCloud}
\end{figure}

\section{Summary}
\label{sec:conclusions}

In this work, we have focused on the AoI performance of a network consisting of $N$ source nodes communicating with a common receiver.
Each source node has a buffer of infinite capacity to store incoming bursty traffic in the form of packets which should keep the receiver timely updated.
We have considered two different queue disciplines at the transmission queues, with and without packet management, and we have derived analytical expressions for the AoI for both cases.
We have investigated three different policies to access the common medium (i) round-robin scheduler (ii) work-conserving scheduler 
(iii) random access.
The work-conserving scheduler outperforms the round-robin scheduler.
For the case of the random access one should optimize the access probabilities in connection to the arrival rates per source and the number of source nodes in the system.
Moreover, we have considered the effect of the success probability of a packet erasure model and the effect of network path diversity, on the AoI performance.  The presented simulation results provide guidelines for the design of the system. 


\appendices
\section{Proof of Lemma \ref{lemma2}}\label{Appendix_A'}

Given the DTMC described in Fig.~\ref{fig:markov_chain_2} we define $r=\lambda(1-\mu)$ and $s=\mu(1-\lambda)$ and obtain the following balance equations:

\begin{align}
&\lambda \pi_0 = \mu (1-\lambda) \pi_1 \Leftrightarrow \pi_1 = \frac{\lambda}{\mu(1-\lambda)} \pi_0, \notag \\
& \pi_1 = \lambda \pi_0 + (1-r-s) \pi_1 + s \pi_2 \Leftrightarrow \pi_2 = \frac{\lambda^2 (1-\mu)}{\mu^2(1-\lambda)^2} \pi_0.  \notag
\label{eq:balance_eq_1_2}
\end{align}
Summarizing, for $n \in \{1,2\}$ we have that 
\begin{equation*}
\pi_n = \frac{\lambda^n (1-\mu)^{n-1}}{\mu^n (1-\lambda)^n} \pi_0. 
\label{eq:balance_eq}
\end{equation*}
Moreover, we know that 
\begin{equation*}
\pi_0 + \frac{\lambda}{\mu(1-\lambda)} \pi_0 + \frac{\lambda^2 (1-\mu)}{\mu^2(1-\lambda)^2} \pi_0 = 1.
\label{eq:sum_balance_eq}
\end{equation*}
Hence, the probability that the queue is empty is given by 
\begin{equation*}
\pi_0 = \frac{\lambda -\mu}{\lambda \left( \frac{\lambda(1-\mu)}{\mu(1-\lambda)}\right)^2-\mu}.
\label{eq:balance_eq_pi0}
\end{equation*}


\bibliography{references}

\begin{thebibliography}{10}
\providecommand{\url}[1]{#1}
\csname url@samestyle\endcsname
\providecommand{\newblock}{\relax}
\providecommand{\bibinfo}[2]{#2}
\providecommand{\BIBentrySTDinterwordspacing}{\spaceskip=0pt\relax}
\providecommand{\BIBentryALTinterwordstretchfactor}{4}
\providecommand{\BIBentryALTinterwordspacing}{\spaceskip=\fontdimen2\font plus
\BIBentryALTinterwordstretchfactor\fontdimen3\font minus
  \fontdimen4\font\relax}
\providecommand{\BIBforeignlanguage}[2]{{%
\expandafter\ifx\csname l@#1\endcsname\relax
\typeout{** WARNING: IEEEtran.bst: No hyphenation pattern has been}%
\typeout{** loaded for the language `#1'. Using the pattern for}%
\typeout{** the default language instead.}%
\else
\language=\csname l@#1\endcsname
\fi
#2}}
\providecommand{\BIBdecl}{\relax}
\BIBdecl

\bibitem{Kaul11_SECON}
S.~Kaul, M.~Gruteser, V.~Rai, and J.~Kenney, ``Minimizing age of information in
  vehicular networks,'' in \emph{Proc. IEEE Communications Society Conference
  on Sensor, Mesh and Ad Hoc Communications and Networks (SECON)}, June 2011,
  pp. 350--358.

\bibitem{Kaul11_GLOBECOM}
S.~Kaul, R.~Yates, and M.~Gruteser, ``On piggybacking in vehicular networks,''
  in \emph{Proc. IEEE Global Telecommunications Conference (GLOBECOM)}, Dec.
  2011, pp. 1--5.

\bibitem{Wu17_TGN}
X.~Wu, J.~Yang, and J.~Wu, ``Optimal status update for age of information
  minimization with an energy harvesting source,'' \emph{IEEE Transactions on
  Green Communications and Networking}, vol.~2, no.~1, pp. 193--204, 2017.

\bibitem{Arafa18_twc}
\BIBentryALTinterwordspacing
A.~Arafa and S.~Ulukus, ``Timely updates in energy harvesting two-hop networks:
  Offline and online policies,'' 2018. [Online]. Available:
  \url{https://arxiv.org/abs/1812.01005}
\BIBentrySTDinterwordspacing

\bibitem{Arafa18_arXiv_trans}
\BIBentryALTinterwordspacing
A.~Arafa, J.~Yang, S.~Ulukus, and H.~V. Poor, ``Age-minimal transmission for
  energy harvesting sensors with finite batteries: Online policies,'' 2018.
  [Online]. Available: \url{https://arxiv.org/abs/1806.07271}
\BIBentrySTDinterwordspacing

\bibitem{Nath18_TGCom}
S.~Nath, J.~Wu, and J.~Yang, ``Delay and energy efficiency tradeoff for
  information pushing system,'' \emph{IEEE Transactions on Green Communications
  and Networking}, vol.~2, no.~4, pp. 1027--1040, 2018.

\bibitem{Chen19_INFOCOM}
Z.~Chen, N.~Pappas, E.~Bj{\"o}rnson, and E.~G. Larsson, ``Age of information in
  a multiple access channel with heterogeneous traffic and an energy harvesting
  node,'' in \emph{Proc. IEEE International Conference on Computer
  Communications (INFOCOM) Wrkshp}, April 2019, pp. 1--6.

\bibitem{Gu19_IoT}
Y.~Gu, H.~Chen, Y.~Zhou, Y.~Li, and B.~Vucetic, ``Timely status update in
  internet of things monitoring systems: An age-energy tradeoff,'' \emph{IEEE
  Internet of Things Journal}, 2019.

\bibitem{Yu99}
H.~Yu, L.~Breslau, and S.~Shenker, ``A scalable web cache consistency
  architecture,'' \emph{ACM SIGCOMM Computer Communication Review}, vol.~29,
  no.~4, pp. 163--174, Aug. 1999.

\bibitem{Kam17_ISIT}
C.~Kam, S.~Kompella, G.~D. Nguyen, J.~E. Wieselthier, and A.~Ephremides,
  ``Information freshness and popularity in mobile caching,'' in \emph{Proc.
  IEEE International Symposium on Information Theory (ISIT)}, June 2017, pp.
  136--140.

\bibitem{Yates17_ISIT2}
R.~D. Yates, P.~Ciblat, A.~Yener, and M.~Wigger, ``Age-optimal constrained
  cache updating,'' in \emph{Proc. IEEE International Symposium on Information
  Theory (ISIT)}, June 2017, pp. 141--145.

\bibitem{Zhong18_ISIT}
J.~Zhong, R.~D. Yates, and E.~Soljanin, ``Two freshness metrics for local cache
  refresh,'' in \emph{Proc. IEEE International Symposium on Information Theory
  (ISIT)}, June 2018, pp. 1924--1928.

\bibitem{Kam2017_Milcom}
C.~Kam, S.~Kompella, G.~D. Nguyen, J.~E. Wieselthier, and A.~Ephremides,
  ``Modeling the age of information in emulated ad hoc networks,'' in
  \emph{Proc. IEEE Military Communications Conference (MILCOM)}, Oct. 2017, pp.
  436--441.

\bibitem{He18_INFOCOM}
Q.~He, G.~D{\'a}n, and V.~Fodor, ``Minimizing age of correlated information for
  wireless camera networks,'' in \emph{Proc. IEEE International Conference on
  Computer Communications (INFOCOM)}, April 2018, pp. 547--552.

\bibitem{abd2018arXiv}
\BIBentryALTinterwordspacing
M.~A. Abd-Elmagid and H.~S. Dhillon, ``Average age-of-information minimization
  in uav-assisted iot networks,'' 2018. [Online]. Available:
  \url{https://arxiv.org/abs/1804.06543}
\BIBentrySTDinterwordspacing

\bibitem{Elmagid18_arXiv}
\BIBentryALTinterwordspacing
M.~A. Abd-Elmagid, N.~Pappas, and H.~S. Dhillon, ``On the role of
  age-of-information in internet of things,'' 2018. [Online]. Available:
  \url{https://arxiv.org/abs/1812.08286}
\BIBentrySTDinterwordspacing

\bibitem{Kadota2016_Allerton}
I.~Kadota, E.~Uysal-Biyikoglu, R.~Singh, and E.~Modiano, ``Minimizing the age
  of information in broadcast wireless networks,'' in \emph{Proc. Allerton
  Conference on Communication, Control, and Computing (Allerton)}, Sept. 2016,
  pp. 844--851.

\bibitem{Hsu17_ISIT}
Y.~P. Hsu, E.~Modiano, and L.~Duan, ``Age of information: Design and analysis
  of optimal scheduling algorithms,'' in \emph{Proc. IEEE International
  Symposium on Information Theory (ISIT)}, June 2017, pp. 561--565.

\bibitem{Kadota18_trans}
I.~Kadota, A.~Sinha, E.~Uysal-Biyikoglu, R.~Singh, and E.~Modiano, ``Scheduling
  policies for minimizing age of information in broadcast wireless networks,''
  \emph{IEEE/ACM Transactions on Networking (TON)}, vol.~26, no.~6, pp.
  2637--2650, 2018.

\bibitem{Buyukates19_arXiv}
\BIBentryALTinterwordspacing
B.~Buyukates, A.~Soysal, and S.~Ulukus, ``Age of information in multicast
  networks with multiple update streams,'' 2019. [Online]. Available:
  \url{https://arxiv.org/abs/1904.11481}
\BIBentrySTDinterwordspacing

\bibitem{Costa15_ICC}
M.~Costa, S.~Valentin, and A.~Ephremides, ``On the age of channel information
  for a finite-state markov model,'' in \emph{Proc. IEEE International
  Conference on Communications (ICC)}, June 2015, pp. 4101--4106.

\bibitem{Costa15_ISIT}
M.~Costa, S.~Valentin, and A.~Ephremides, ``On the age of channel state
  information for non-reciprocal wireless links,'' in \emph{Proc. IEEE
  International Symposium on Information Theory (ISIT)}, June 2015, pp.
  2356--2360.

\bibitem{Klein17}
A.~G. Klein, S.~Farazi, W.~He, and D.~R. Brown, ``Staleness bounds and
  efficient protocols for dissemination of global channel state information,''
  \emph{IEEE Transactions on Wireless Communications}, vol.~16, no.~9, pp.
  5732--5746, Sept. 2017.

\bibitem{Farazi17_ICCCN}
S.~Farazi, A.~G. Klein, and D.~R. Brown, ``Bounds on the age of information for
  global channel state dissemination in fully-connected networks,'' in
  \emph{Proc. IEEE 26th International Conference on Computer Communication and
  Networks (ICCCN)}, July 2017, pp. 1--7.

\bibitem{Kaul12_INFOCOM}
S.~Kaul, R.~Yates, and M.~Gruteser, ``Real-time status: How often should one
  update?'' in \emph{Proc. IEEE International Conference on Computer
  Communications (INFOCOM)}, March 2012, pp. 2731--2735.

\bibitem{Kosta17_ISIT}
A.~Kosta, N.~Pappas, A.~Ephremides, and V.~Angelakis, ``Age and value of
  information: Non-linear age case,'' in \emph{Proc. IEEE International
  Symposium on Information Theory (ISIT)}, June 2017, pp. 326--330.

\bibitem{Modiano15_ISIT}
L.~Huang and E.~Modiano, ``Optimizing age-of-information in a multi-class
  queueing system,'' in \emph{Proc. IEEE International Symposium on Information
  Theory (ISIT)}, June 2015, pp. 1681--1685.

\bibitem{Yates19_transactions}
R.~D. Yates and S.~K. Kaul, ``The age of information: Real-time status updating
  by multiple sources,'' \emph{IEEE Transactions on Information Theory},
  vol.~65, no.~3, pp. 1807--1827, 2019.

\bibitem{Stamatakis18_arXiv}
\BIBentryALTinterwordspacing
G.~Stamatakis, N.~Pappas, and A.~Traganitis, ``Optimal policies for status
  update generation in a wireless system with heterogeneous traffic,'' 2018.
  [Online]. Available: \url{https://arxiv.org/abs/1810.03201}
\BIBentrySTDinterwordspacing

\bibitem{Kam16}
C.~Kam, S.~Kompella, G.~D. Nguyen, and A.~Ephremides, ``Effect of message
  transmission path diversity on status age,'' \emph{IEEE Transactions on
  Information Theory}, vol.~62, no.~3, pp. 1360--1374, March 2016.

\bibitem{Costa16}
M.~Costa, M.~Codreanu, and A.~Ephremides, ``On the age of information in status
  update systems with packet management,'' \emph{IEEE Transactions on
  Information Theory}, vol.~62, no.~4, pp. 1897--1910, April 2016.

\bibitem{Pappas15_ICC}
N.~Pappas, J.~Gunnarsson, L.~Kratz, M.~Kountouris, and V.~Angelakis, ``Age of
  information of multiple sources with queue management,'' in \emph{Proc. IEEE
  International Conference on Communications (ICC)}, June 2015, pp. 5935--5940.

\bibitem{Kaul12_CISS}
S.~K. Kaul, R.~D. Yates, and M.~Gruteser, ``Status updates through queues,'' in
  \emph{Proc. IEEE Conference on Information Sciences and Systems (CISS)},
  March 2012, pp. 1--6.

\bibitem{Najm16_ISIT}
E.~Najm and R.~Nasser, ``Age of information: The gamma awakening,'' in
  \emph{Proc. IEEE International Symposium on Information Theory (ISIT)}, July
  2016, pp. 2574--2578.

\bibitem{Bedewy16_ISIT}
A.~M. Bedewy, Y.~Sun, and N.~B. Shroff, ``Optimizing data freshness,
  throughput, and delay in multi-server information-update systems,'' in
  \emph{Proc. IEEE International Symposium on Information Theory (ISIT)}, July
  2016, pp. 2569--2573.

\bibitem{Yates18_INFOCOM}
R.~D. Yates, ``Age of information in a network of preemptive servers,'' in
  \emph{Proc. IEEE International Conference on Computer Communications
  (INFOCOM) Workshops}, April 2018, pp. 118--123.

\bibitem{Najm18_INFOCOM}
E.~Najm and E.~Telatar, ``Status updates in a multi-stream m/g/1/1 preemptive
  queue,'' in \emph{Proc. IEEE International Conference on Computer
  Communications (INFOCOM) Workshops}, April 2018, pp. 124--129.

\bibitem{Yates18_arXivSHS}
\BIBentryALTinterwordspacing
R.~D. Yates, ``The age of information in networks: Moments, distributions, and
  sampling,'' 2018. [Online]. Available: \url{https://arxiv.org/abs/1806.03487}
\BIBentrySTDinterwordspacing

\bibitem{Inoue18_arXiv}
\BIBentryALTinterwordspacing
Y.~Inoue, H.~Masuyama, T.~Takine, and T.~Tanaka, ``A general formula for the
  stationary distribution of the age of information and its application to
  single-server queues,'' 2018. [Online]. Available:
  \url{https://arxiv.org/abs/1804.06139}
\BIBentrySTDinterwordspacing

\bibitem{He16_WiOpt}
Q.~He, D.~Yuan, and A.~Ephremides, ``Optimizing freshness of information: On
  minimum age link scheduling in wireless systems,'' in \emph{Proc. IEEE 14th
  International Symposium on Modeling and Optimization in Mobile, Ad Hoc, and
  Wireless Networks (WiOpt)}, May 2016, pp. 1--8.

\bibitem{He16_ICC}
Q.~He, D.~Yuan, and A.~Ephremides, ``On optimal link scheduling with min-max
  peak age of information in wireless systems,'' in \emph{Proc. IEEE
  International Conference on Communications (ICC)}, May 2016, pp. 1--7.

\bibitem{Talak18_arXiv_SPAWC}
\BIBentryALTinterwordspacing
R.~Talak, S.~Karaman, and E.~Modiano, ``Distributed scheduling algorithms for
  optimizing information freshness in wireless networks,'' 2018. [Online].
  Available: \url{https://arxiv.org/abs/1803.06469}
\BIBentrySTDinterwordspacing

\bibitem{Talak18_arXiv_Mobihoc}
R.~Talak, S.~Karaman, and E.~Modiano, ``Optimizing information freshness in
  wireless networks under general interference constraints,'' in \emph{Proc.
  ACM International Symposium on Mobile Ad Hoc Networking and Computing
  (Mobihoc)}, June 2018, pp. 61--70.

\bibitem{Talak18_arXiv_ISIT}
R.~Talak, I.~Kadota, S.~Karaman, and E.~Modiano, ``Scheduling policies for age
  minimization in wireless networks with unknown channel state,'' in
  \emph{Proc. IEEE International Symposium on Information Theory (ISIT)}, June
  2018, pp. 2564--2568.

\bibitem{Kadota18_transactions}
I.~Kadota, A.~Sinha, and E.~Modiano, ``Scheduling algorithms for optimizing age
  of information in wireless networks with throughput constraints,''
  \emph{IEEE/ACM Transactions on Networking}, 2018.

\bibitem{Jiang2018_arXiv}
\BIBentryALTinterwordspacing
Z.~Jiang, B.~Krishnamachari, X.~Zheng, S.~Zhou, and Z.~Niu, ``Timely status
  update in massive iot systems: Decentralized scheduling for wireless
  uplinks,'' 2018. [Online]. Available: \url{https://arxiv.org/abs/1801.03975}
\BIBentrySTDinterwordspacing

\bibitem{Maatouk19_INFOCOM}
A.~Maatouk, M.~Assaad, and A.~Ephremides, ``Minimizing the age of information:
  Noma or oma?'' in \emph{Proc. IEEE International Conference on Computer
  Communications (INFOCOM) Wrkshp}, April 2019, pp. 1--7.

\bibitem{Jiang19_INFOCOM}
Z.~Jiang, S.~Zhou, Z.~Niu, and Y.~Cheng, ``A unified sampling and scheduling
  approach for status update in multiaccess wireless networks,'' in \emph{Proc.
  IEEE International Conference on Computer Communications (INFOCOM)}, April
  2019.

\bibitem{Kaul17_ISIT}
R.~D. Yates and S.~K. Kaul, ``Status updates over unreliable multiaccess
  channels,'' in \emph{Proc. IEEE International Symposium on Information Theory
  (ISIT)}, June 2017, pp. 331--335.

\bibitem{Kosta18_GLOBECOM}
A.~Kosta, N.~Pappas, A.~Ephremides, and V.~Angelakis, ``Age of information and
  throughput in a shared access network with heterogeneous traffic,,'' in
  \emph{Proc. IEEE Global Telecommunications Conference (GLOBECOM)}, Dec. 2018,
  pp. 1--6.

\bibitem{NET-060}
A.~Kosta, N.~Pappas, and V.~Angelakis, ``Age of information: A new concept,
  metric, and tool,'' \emph{Foundations and
  Trends$\textsuperscript{\textregistered}$ in Networking}, vol.~12, no.~3, pp.
  162--259, 2017.

\bibitem{Kleinrock}
L.~Kleinrock, \emph{{Queueing Systems}}.\hskip 1em plus 0.5em minus 0.4em\relax
  Wiley Interscience, 1975, vol. I: Theory.

\bibitem{Nelson2013probability}
R.~Nelson, \emph{Probability, stochastic processes and queueing theory: the
  mathematics of computer performance modeling}.\hskip 1em plus 0.5em minus
  0.4em\relax New York: Springer-Verlang, 1995.

\bibitem{Cooper}
R.~B. Cooper, \emph{Introduction to queueing theory}.\hskip 1em plus 0.5em
  minus 0.4em\relax New York: North Holland.

\end{thebibliography}
\bibliographystyle{IEEEtran}

\end{document}